\title{Subword Complexity and $k$-Synchronization}
\titlerunning{Subword Complexity} %optional, in case that the title is too long; the running title should fit into the top page column
\author{Daniel Go\v{c}, Luke Schaeffer, and Jeffrey Shallit}
\affil{School of Computer Science, University of
Waterloo, Waterloo, ON  N2L 3G1 Canada\\
\texttt{\{dgoc,l3schaeffer,shallit\}@uwaterloo.ca}}
\authorrunning{D. Go\v{c}, L. Schaeffer, and J. Shallit} %optional. First: Use abbreviated first/middle names. Second (only in severe cases): Use first author plus 'et. al.'
\keywords{Automata, sequences, $k$-automatic, synchronized, subword complexity}% mandatory: Please provide 1-5 keywords
\def\Enn{\mathbb{N}}
\begin{document}

\maketitle

\begin{abstract}
We show that the subword complexity function $\rho_{\bf x} (n)$, which
counts the number of distinct factors of length $n$  of a sequence
$\bf x$, is $k$-synchronized in the sense of Carpi if $\bf x$ is
$k$-automatic.  As an
application, we generalize recent results of Goldstein.  We give
analogous results for the number of distinct factors of length $n$ that
are primitive words or powers.  In contrast, we show that the function
that counts the number of unbordered factors of length $n$ is {\it
not\/} necessarily $k$-synchronized for $k$-automatic sequences.
\end{abstract}

\section{Introduction}

We are concerned with the representation of integers in base $k$, where
$k \geq 2$ is an integer.  We let $\Sigma_k = \lbrace
0,1,2, \ldots, k-1 \rbrace$, and we let
$(n)_k$ denote the canonical
representation of $n$ in base $k$, starting with the most
significant digit, and {\it without\/} leading zeroes.
If $x \in \Sigma_k^*$, we let $[x]_k$ denote the integer
represented by $x$ (where $x$ is allowed to have leading zeroes).
To represent a pair of integers $(m, n)$, we use words over the
alphabet $\Sigma_k \times \Sigma_k$.  For such a word $x$,
we let $\pi_i (x)$
to be the projection onto the $i$'th coordinate.  The canonical
representation $(m,n)_k$ is defined to be the word $x$ such
that $[\pi_1(x)]_k = m$ and $[\pi_2(x)]_k = n$, and having
no leading $[0,0]$'s.  For example
$(43,17)_2 = [1,0][0,1][1,0][0,0][1,0][1,1]$.  

Recently, Arturo Carpi and his co-authors
\cite{Carpi&Maggi:2001,Carpi&DAlonzo:2009,Carpi&DAlonzo:2010}
introduced a very interesting class of sequences that are computable by
automata in a novel fashion:  the class of $k$-synchronized sequences.
Let $(f(n))_{n \geq 0}$ be a sequence taking values in $\Enn$.  They
call such a sequence {\it $k$-synchronized\/} if there is a
deterministic finite automaton $M$ accepting the base-$k$
representation of the graph of $f$, namely $\{ (n, f(n))_k  \ : \ n
\geq 0 \}$.  

Sequences that are $k$-synchronized are ``halfway between'' the class
of $k$-automatic sequences, introduced by Cobham \cite{Cobham:1972} and
studied in many papers; and the class of $k$-regular sequences,
introduced by Allouche and Shallit
\cite{Allouche&Shallit:1992,Allouche&Shallit:2003}.  They are
particularly interesting for two reasons.  If a sequence $(f(n))$ is
$k$-synchronized, then
\begin{itemize}
\itemindent 10pt
\item[(a)] we immediately get a bound on its growth rate:  
$f(n) = O(n)$;

\item[(b)] we immediately get a linear-time algorithm for efficiently
calculating $f(n)$.
\end{itemize}

Result (a) can be found in \cite[Prop.\ 2.5]{Carpi&Maggi:2001}.
We now state and prove result (b).

\begin{theorem}
Suppose $(f(n))_{n \geq 0}$ is $k$-synchronized.  Then there is an algorithm
that, given the base-$k$ representation of $n$, will compute 
the base-$k$ representation of $f(n)$ in $O(\log n)$ time.
\label{calc}
\end{theorem}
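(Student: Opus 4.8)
The plan is to exploit two ingredients: result (a), that $f(n) = O(n)$ (from \cite[Prop.\ 2.5]{Carpi&Maggi:2001}), which forces $(f(n))_k$ to have only $O(\log n)$ digits; and the defining automaton itself, which lets us search for those digits layer by layer. Write $M = (Q, \Sigma_k \times \Sigma_k, \delta, q_0, F)$ for the automaton accepting $\{(n,f(n))_k : n \geq 0\}$. Since $Q$ and $k$ are fixed constants, every transition is evaluated in $O(1)$ time, and the base-$k$ input $(n)_k$ has length $\Theta(\log n)$.

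First I would normalize away the leading-zero convention: by the standard leading-zero normalization for base-$k$ automata, $M$ may be replaced by an equivalent automaton that tolerates leading $[0,0]$'s, so that it accepts $[a_1,b_1]\cdots[a_\ell,b_\ell]$ exactly when $[b_1\cdots b_\ell]_k = f([a_1\cdots a_\ell]_k)$, for every length $\ell$ at least as large as both components' true lengths. Because $f(n) = O(n)$, I can fix once and for all a constant $c$ with $k^{|(n)_k| + c} > f(n)$ for all $n$, and set $\ell = |(n)_k| + c = O(\log n)$. Padding $(n)_k$ with leading zeros to length $\ell$ gives $a_1\cdots a_\ell$, and the problem reduces to recovering the digits $b_1 \cdots b_\ell$ for which $M$ accepts $[a_1,b_1]\cdots[a_\ell,b_\ell]$; reading off $[b_1\cdots b_\ell]_k$ (and dropping leading zeros) then yields $f(n)$.

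To recover the $b_i$ I would run a forward reachability pass and a backward reconstruction on the layered ``product'' graph whose $i$-th layer is a copy of $Q$ and in which $q$ in layer $i-1$ has an edge labeled $b$ to $\delta(q,[a_i,b])$ in layer $i$. The forward pass computes $R_0 = \{q_0\}$ and $R_i = \{\delta(q,[a_i,b]) : q \in R_{i-1},\ b \in \Sigma_k\}$ in $O(1)$ time per layer. The choice of $\ell$ guarantees at least one accepting path (take the $b_i$ to be the padded digits of $(f(n))_k$), so $R_\ell$ meets $F$; a backward walk from an accepting state, at each step selecting a predecessor in $R_{i-1}$ together with the digit $b_i$ realizing the transition, reconstructs $b_\ell,\ldots,b_1$ in $O(1)$ per step. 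The whole procedure runs in $O(\ell) = O(\log n)$ time.

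The main obstacle I anticipate is bookkeeping rather than a deep idea: pinning down the padded length $\ell$ is exactly where result (a) is essential, since without the $O(n)$ growth bound the length of $(f(n))_k$, and hence the search space, would be unbounded. One must also handle the canonical/leading-zero conventions carefully so that the reconstruction is genuinely reading the graph of $f$. Finally I would note that single-valuedness of $f$ makes the backward phase robust: since every path in the layered graph reads first coordinate equal to the padded $(n)_k$, any accepting path spells out $(n,p)_k$ with $p = f(n)$, so the greedy backward choice (constrained to reachable predecessors in $R_{i-1}$, which always extend back to $q_0$) recovers the correct value without backtracking.
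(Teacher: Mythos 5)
Your proposal is correct and follows essentially the same approach as the paper: both use result (a) to bound the length of $(f(n))_k$ by $O(\log n)$, build a layered graph of $O(\log n)$ copies of the automaton restricted to transitions consistent with the known first-component digits, and find an accepting path whose second components spell out $f(n)$. The only differences are cosmetic bookkeeping: the paper searches backward from the final states via DFS over the possible padded lengths, while you normalize the automaton to accept padded representations of a fixed length $\ell$ and do a forward reachability pass followed by backward reconstruction.
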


\begin{proof}
We know there is a 
DFA $M = (Q, \Sigma_k \times \Sigma_k, \delta,
q_0, F)$ accepting
$L = \lbrace (n, f(n))_k \ : \ n \geq 0 \rbrace$.
From result (a) above
we know that $f(n) \leq Cn $, for some constant $C$, so if
$(n, f(n))_k$ is accepted, then the first component is
$0^s w$ for some $s \leq \log_k C$, where
$w$ is the canonical base-$k$ representation of $n$.  
Let $N = s + |w|$.
We now create a directed graph out of $N+1$ copies of
the transition graph for $M$, by starting at the final states
of $M$ and tracing a path backwards,
using the reversed transitions of $M$.  This path is chosen so the
first component of the labels encountered
form $w^R 0^s$ and the second component
is arbitrary.  The reslting graph has at most $O(N)$ transitions and vertices.
There will be only one path of length $l$ with $|w| \leq l \leq N$
that leads to the initial state $q_0$, and this can be found with
depth-first search in $O(N)$ time.  Then, reading the corresponding
labels of the second components in the forward direction gives the
base-$k$ representation of $f(n)$.
\end{proof}

In this paper,
we are concerned with infinite words over a {\it finite\/} alphabet.
Let ${\bf x} = a_0 a_1 a_2 \cdots$ be an infinite word.
By ${\bf x}[m..n]$ we mean the factor $a_m a_{m+1} \cdots a_n$ of
$\bf x$ of length $n-m+1$.
The subword complexity function $\rho_{\bf x} (n)$ counts the number
of distinct factors of length $n$.  

An infinite word or sequence
${\bf x}$ is said to be $k$-automatic if there is an automaton with
outputs associated with the states that, on input $(n)_k$, reaches a state
with output ${\bf x}[n]$.
In this paper we show that if $\bf x$ is a $k$-automatic sequence, then
the subword complexity $\rho_{\bf x} (n)$ is $k$-synchronized. 
As an application, we generalize and simplify recent results of
Goldstein \cite{Goldstein:2009,Goldstein:2011}.
Furthermore, we obtain
analogous results for the number of length-$n$ primitive words and
the number of length-$n$ powers.  

We remark that there are a number of quantities about $k$-automatic
sequences already known to be
$k$-synchronized.  These include
\begin{itemize}
\item the separator sequence of a non-ultimately-periodic $k$-automatic sequence
	\cite{Carpi&Maggi:2001};
\item the repetitivity index of a $k$-automatic sequence
	\cite{Carpi&DAlonzo:2009};
\item the recurrence function of a $k$-automatic sequence
	\cite{Charlier&Rampersad&Shallit:2011};
\item the ``appearance'' function of a $k$-automatic sequence 
	\cite{Charlier&Rampersad&Shallit:2011}.
\end{itemize}
The latter two examples were not explicitly stated to be $k$-synchronized
in \cite{Charlier&Rampersad&Shallit:2011}, but the result follows immediately
from the proofs in that paper.

\section{Subword complexity}

Cobham \cite{Cobham:1972} proved that if $\bf x$ is a $k$-automatic
sequence, then $\rho_{\bf x} (n) = O(n)$.  Cassaigne \cite{Cassaigne:1996}
proved that
any infinite word $\bf x$ satisfying $\rho_{\bf x} (n) = O(n)$ also
satisfies $\rho_{\bf x} (n+1) - \rho_{\bf x} (n) = O(1)$.
Carpi and D'Alonzo \cite{Carpi&DAlonzo:2010} 
showed that
the subword complexity function
$\rho_{\bf x} (n)$ is a $k$-regular sequence.

Charlier, Rampersad, and Shallit \cite{Charlier&Rampersad&Shallit:2011}
found this result independently, using a somewhat different approach.
They used the following idea.  Call an occurrence of the
factor $t = {\bf x}[i..i+n-1]$
``novel'' if $t$ does not appear as a factor of ${\bf x}[0..i+n-2]$.  
In other words, the leftmost occurrence of $t$ in $\bf x$
is at position $i$.  Then the number of factors of length $n$ in
$\bf x$ is equal to the number of novel occurrences of factors of length $n$.
The property that ${\bf x}[i..i+n-1]$ is novel
can be expressed as a predicate, as follows:
\begin{multline}
\{ (n,i)_k \ : \ \forall j, 0 \leq j < i \ 
	{\bf x}[i..i+n-1] \not= {\bf x}[j..j+n-1] \} 
= \\
\{ (n,i)_k \ : \ \forall j, 0 \leq j < i  \ 
	\exists m, 0 \leq m < n \  
		{\bf x}[i+m] \not= {\bf x}[j+m] \} .
\label{pred}
\end{multline}

As shown in \cite{Charlier&Rampersad&Shallit:2011}, the base-$k$
representation of the integers satisfying
any predicate
of this form (expressible using quantifiers, integer addition and
subtraction, indexing into a $k$-automatic sequence
$\bf x$, logical operations, and comparisons)
can be accepted by an explicitly-constructable
deterministic finite automaton.  From this, it follows that
the sequence $\rho_{\bf x} (n)$ is $k$-regular, and hence
can be computed explicitly in terms of the product of certain
matrices and vectors depending on the base-$k$ expansion of $n$.

We show that, in fact, the subword complexity function
$\rho_{\bf x}(n)$ is $k$-synchronized.  
The main observation needed is the following (Theorem~\ref{luke}):
in any sequence of linear
complexity, the novel occurrences of 
factors are ``clumped together'' in a bounded
number of contiguous blocks.  This makes it easy to count them.

More precisely, let $\bf x$ be an infinite word and
for any $n$ consider the set of novel occurrences
$$ E_{\bf x}(n) := \lbrace i \ : \text{ the occurrence }
{\bf x}[i..i+n-1] \text{ is novel } \rbrace.$$
We consider how $E_{\bf x}(n)$ evolves with increasing $n$.

As an example, consider the Thue-Morse sequence 
$${\bf t} = t_0 t_1 t_2 \cdots = 0110100110010110 \cdots,$$
defined by letting $t_n$ be the number of $1$'s in the binary expansion of
$n$, taken modulo $2$.  The gray squares in
the rows of
of Figure~\ref{fig1} depict the members of $E_{\bf t}(n)$ for the
Thue-Morse sequence for $1 \leq n \leq 9$.

\begin{figure}[H]
\begin{center}
\resizebox{12cm}{!}{\input 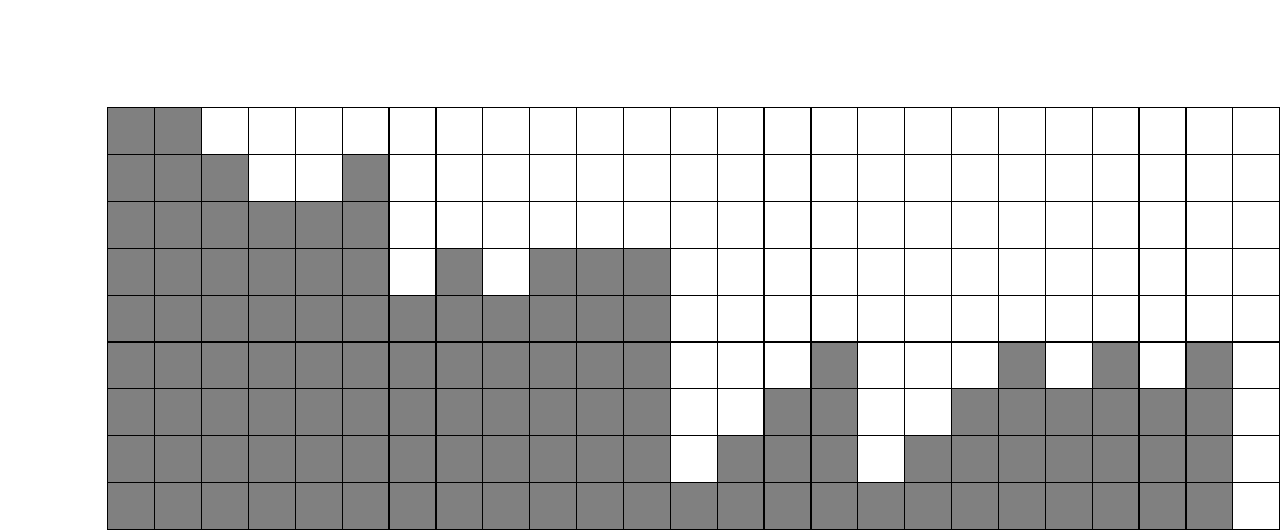_t}
\end{center}
\caption{Evolution of novel occurrences of factors in the Thue-Morse sequence}
\label{fig1}
\end{figure}

\begin{lemma}
Let $\bf x$ be an infinite word.
If the factor of length $n$ beginning at position $i$ is a novel occurrence,
so is
\begin{itemize}
\itemindent 10pt
\item[(a)] the factor of length $n+1$ beginning at position $i$;
\item[(b)] the factor of length $n+1$ beginning at position $i-1$
(for $i \geq 1$).
\end{itemize}
\end{lemma}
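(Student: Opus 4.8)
The plan is to prove both statements directly from the definition of ``novel'': an occurrence of the length-$n$ factor at position $i$ is novel precisely when ${\bf x}[i..i+n-1] \neq {\bf x}[j..j+n-1]$ for every $j$ with $0 \leq j < i$; equivalently, the factor ${\bf x}[i..i+n-1]$ does not occur anywhere earlier in $\bf x$. The key observation is that lengthening a factor can only make it \emph{harder} to match an earlier occurrence, never easier, because any occurrence of a longer factor automatically contains an occurrence of each of its subfactors.

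For part (a), I would argue by contraposition. Suppose the length-$(n+1)$ factor beginning at position $i$ is \emph{not} novel, so there exists $j$ with $0 \leq j < i$ satisfying ${\bf x}[j..j+n] = {\bf x}[i..i+n]$. Restricting this equality to the first $n$ symbols gives ${\bf x}[j..j+n-1] = {\bf x}[i..i+n-1]$ with the same witness $j < i$, which shows the length-$n$ factor at position $i$ is also not novel. Taking the contrapositive yields the claim.

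For part (b), the idea is similar but the earlier occurrence is shifted by one. Assume $i \geq 1$ and suppose the length-$(n+1)$ factor beginning at position $i-1$ is not novel, witnessed by some $j'$ with $0 \leq j' < i-1$ and ${\bf x}[j'..j'+n] = {\bf x}[i-1..i-1+n]$. Dropping the first symbol of each side, i.e.\ restricting to positions $j'+1$ through $j'+n$ on the left and $i$ through $i+n-1$ on the right, gives ${\bf x}[j'+1..j'+n] = {\bf x}[i..i+n-1]$, and the shifted witness satisfies $j'+1 < i$. Hence the length-$n$ factor at position $i$ is not novel, and the contrapositive again gives the result.

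I expect no serious obstacle here; the entire content is bookkeeping with indices, and the only point requiring a little care is handling the shift in part (b) correctly so that the new witness $j'+1$ still lies strictly below $i$ (which follows from $j' < i-1$) and that the hypothesis $i \geq 1$ is exactly what makes position $i-1$ meaningful. The geometric picture is that novel occurrences form a ``downward- and leftward-closed'' region in the $(i,n)$ plane, which is precisely what Figure~\ref{fig1} illustrates and what the later clumping result (Theorem~\ref{luke}) will exploit.
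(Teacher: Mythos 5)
Your proof is correct and follows essentially the same argument as the paper: in part (a) an earlier occurrence of the length-$(n+1)$ factor restricts to an earlier occurrence of its length-$n$ prefix, and in part (b) dropping the first letter of an earlier occurrence at $j' < i-1$ yields an occurrence of the length-$n$ factor at $j'+1 < i$. The only cosmetic difference is that you phrase it as contraposition while the paper phrases it as contradiction, which is the same argument.
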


\begin{proof}
(a) Suppose the factor of length $n+1$ also occurs at some position
$j < i$.  Then the factor of length $n$ also occurs at position $j$,
contradicting the fact that it was a novel occurrence at $i$.

(b) Suppose the factor of length $n+1$ beginning at position $i-1$
occurs at some earlier position $j < i-1$.    We can write the
factor as $ax$, where $a$ is a single letter and $x$ is a word,
so the factor of length $n$ beginning at position $i$ must also
occur at position $j+1 < i$.  But then it is not a novel occurrence.
\end{proof}

\begin{theorem}
Let $\bf x$ be an infinite word.
For $n \geq 1$,
the number of contiguous blocks in $E_{\bf x}(n)$  is at most
$\rho_{\bf x} (n) - \rho_{\bf x} (n-1) + 1$.
\label{luke}
\end{theorem}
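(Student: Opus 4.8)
The plan is to count the contiguous blocks of $E_{\bf x}(n)$ by identifying each block with its leftmost position, and then to show that almost every such left endpoint is a position that is novel at length $n$ but was \emph{not} novel at length $n-1$. Since the number of such ``new'' positions is exactly $\rho_{\bf x}(n) - \rho_{\bf x}(n-1)$, this will yield the stated bound.

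First I would record two elementary facts. Because each distinct factor of length $n$ has a unique leftmost occurrence, the novel occurrences are in bijection with the factors of length $n$, so $|E_{\bf x}(n)| = \rho_{\bf x}(n)$. Moreover, part (a) of the Lemma gives $E_{\bf x}(n-1) \subseteq E_{\bf x}(n)$, and hence $\rho_{\bf x}(n) - \rho_{\bf x}(n-1) = |E_{\bf x}(n) \setminus E_{\bf x}(n-1)|$ counts exactly the positions that become novel in passing from length $n-1$ to length $n$.

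Next I would parametrize the blocks: the number of contiguous blocks of $E_{\bf x}(n)$ equals the number of positions $i \in E_{\bf x}(n)$ for which $i = 0$ or $i - 1 \notin E_{\bf x}(n)$, since these are precisely the left endpoints of the blocks. The crux is then to show that every left endpoint $i \geq 1$ lies in $E_{\bf x}(n) \setminus E_{\bf x}(n-1)$. This is where part (b) of the Lemma does the work, read contrapositively: if such an $i$ were already in $E_{\bf x}(n-1)$, then part (b) would force $i-1 \in E_{\bf x}(n)$, contradicting the assumption that $i$ is the left end of a block. Thus the identity map embeds the set of left endpoints with $i \geq 1$ injectively into $E_{\bf x}(n) \setminus E_{\bf x}(n-1)$.

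Finally I would assemble the count: there are at most $|E_{\bf x}(n) \setminus E_{\bf x}(n-1)| = \rho_{\bf x}(n) - \rho_{\bf x}(n-1)$ left endpoints with $i \geq 1$, plus at most one more (the endpoint $i = 0$, which is always novel and hence always a left endpoint), giving the desired bound of $\rho_{\bf x}(n) - \rho_{\bf x}(n-1) + 1$ on the number of blocks. I expect the only real subtlety to be the contrapositive application of part (b) in the previous step, together with carefully accounting for the single boundary endpoint $i = 0$, which is exactly what produces the $+1$ in the statement.
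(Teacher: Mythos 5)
Your proof is correct, but it takes a genuinely more direct route than the paper's. The paper argues by induction on $n$, tracking how $E_{\bf x}(n-1)$ evolves into $E_{\bf x}(n)$: each of the $t$ blocks of row $n-1$ survives (part (a) of the Lemma) and, except for the block starting at position $0$, acquires a novel occurrence one position to its left (part (b)); these leftward extensions consume $t-1$ of the $\rho_{\bf x}(n)-\rho_{\bf x}(n-1)$ newly novel positions, and each remaining new position can spawn at most one fresh block, giving $t + \bigl(\rho_{\bf x}(n)-\rho_{\bf x}(n-1)-(t-1)\bigr) = \rho_{\bf x}(n)-\rho_{\bf x}(n-1)+1$. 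You instead count blocks by their left endpoints and show, via the contrapositive of part (b), that every left endpoint other than $0$ lies in $E_{\bf x}(n)\setminus E_{\bf x}(n-1)$, a set of size exactly $\rho_{\bf x}(n)-\rho_{\bf x}(n-1)$ by part (a) and the bijection between novel occurrences and factors. Both arguments rest on the same two-part Lemma and the identity $|E_{\bf x}(n)| = \rho_{\bf x}(n)$, but your version reveals that the induction is dispensable --- indeed, the paper's inductive hypothesis (the bound on the number of blocks in row $n-1$) is never actually invoked in its inductive step, which works for an arbitrary $t$. Your injection argument also replaces the slightly informal worst-case bookkeeping (``the remaining occurrences could be in their own individual blocks'') with an exact characterization of where a block can begin, and as a side benefit it avoids the arithmetic slip in the paper's final line (which prints $\rho(n)+\rho(n-1)+1$ where $\rho(n)-\rho(n-1)+1$ is meant). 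What the paper's evolutionary framing buys in exchange is consistency with the rest of the exposition: the row-by-row picture of Figure~\ref{fig1} and the block-tracking construction used later in the implementation section.
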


\begin{proof}
We prove the claim by induction on $n$.  For $n = 1$ the claim says
there are at most $\rho_{\bf x}(1)$ contiguous blocks, which is evidently
true, since there are at most $\rho_{\bf x}(1)$ novel factors of length $1$.

Now assume the claim is true for all $n' < n$; we prove it for $n$.
Consider the evolution of the novel occurrences of factors in going from length
$n-1$ to $n$.
Every occurrence that was previously novel is still novel, and furthermore
in every contiguous block except the first, we get novel occurrences
at one position to the left of the beginning of the block.  
So if row $n-1$ has $t$ contiguous blocks, then we get $t-1$ novel 
occurrences
at the beginning of each block, except the first.  (Of course, the
first block begins at position $0$, since any factor beginning at
position $0$ is novel, no matter what the length is.)
The remaining
$ \rho(n) - \rho(n-1) - (t-1)$ novel occurrences could be, in the worst case,
in their own individual contiguous blocks. Thus
row $n$ has at most $ t + \rho(n) - \rho(n-1) - (t-1) = \rho(n)+\rho(n-1) +1$
contiguous blocks.  
\end{proof}

In our Thue-Morse example, it is well-known that 
$\rho_{\bf t}(n) -\rho_{\bf t}(n-1) \leq 4$,
so the number of contiguous blocks in any row is at most $5$.  This
is achieved, for example, for $n = 6$.

\begin{example}
We give an example of a recurrent infinite word over a finite alphabet
where the number of contiguous blocks in $E_{\bf x}(n)$ is unbounded.
Consider the word
\begin{displaymath}
{\bf w} =  \prod_{n \geq 1}  (n)_2 
= 1 10 11 100 101 110 111 1000 \cdots .
\end{displaymath}
Then for each $n \geq 5$
the first occurrence of each of the words
$0^{n-1} 1$, $ 0^{n-2} 11, \ldots, 0^2 1^{n-2}$
have a non-novel occurrence immediately following them, which 
shows there at at least $n-2$ blocks in
$E_{\bf w} (n)$.
\end{example}

\begin{corollary}
If $\rho_{\bf x} (n) = O(n)$, then there is a constant $C$ such that
every row $E_{\bf x}(n)$ in the evolution of
novel occurrences consists of at most $C$ contiguous blocks.
\end{corollary}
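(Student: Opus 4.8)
The plan is to combine Theorem~\ref{luke} with the theorem of Cassaigne cited at the start of this section. Theorem~\ref{luke} already tells us that the number of contiguous blocks in $E_{\bf x}(n)$ is bounded above by the first difference $\rho_{\bf x}(n) - \rho_{\bf x}(n-1)$, plus $1$. So the entire content of the corollary reduces to showing that this first difference is bounded by a constant whenever $\rho_{\bf x}(n) = O(n)$.

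First I would invoke Cassaigne's result, quoted earlier in the excerpt: any infinite word $\bf x$ with $\rho_{\bf x}(n) = O(n)$ also satisfies $\rho_{\bf x}(n+1) - \rho_{\bf x}(n) = O(1)$. Concretely, this produces a constant $D$ such that $\rho_{\bf x}(n) - \rho_{\bf x}(n-1) \leq D$ for all $n \geq 1$. This is the one nontrivial ingredient, but since it is available as a cited result, there is nothing to prove here; the hypothesis $\rho_{\bf x}(n) = O(n)$ is precisely the trigger for Cassaigne's bound.

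Then I would feed this into Theorem~\ref{luke}: for every $n \geq 1$, the number of contiguous blocks in $E_{\bf x}(n)$ is at most
\[
\rho_{\bf x}(n) - \rho_{\bf x}(n-1) + 1 \leq D + 1.
\]
Setting $C = D + 1$ establishes the claim uniformly over all rows $n$.

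I expect no real obstacle, since the argument is a direct chaining of two earlier results; the only point requiring a moment's care is simply recognizing that Cassaigne's theorem is exactly what converts the $O(n)$ growth hypothesis into a \emph{uniform} constant bound on the consecutive differences, which is what Theorem~\ref{luke} needs as input. The slightly subtle feature worth flagging is that $C$ depends on $\bf x$ only through the implied constant in $\rho_{\bf x}(n) = O(n)$, so the bound is indeed a single constant valid for every $n$.
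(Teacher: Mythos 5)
Your proof is correct and follows exactly the paper's own argument: apply Cassaigne's theorem to bound the first difference $\rho_{\bf x}(n) - \rho_{\bf x}(n-1)$ by a constant, then feed that bound into Theorem~\ref{luke}. The only difference is cosmetic --- the paper writes the Cassaigne constant as $C-1$ so that the block count is at most $C$, while you call it $D$ and set $C = D+1$.
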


\begin{proof}
By the result of Cassaigne \cite{Cassaigne:1996}, we know that there exists
a constant $C$ such that $\rho_{\bf x} (n)- \rho_{\bf x} (n-1) \leq C-1$.
By Theorem~\ref{luke}, we know there are at most $C$ contiguous blocks
in any $E_{\bf x}(n)$.
\end{proof}

\begin{theorem}
Let $\bf x$ be a $k$-automatic sequence.  Then its
subword complexity function $\rho_{\bf x} (n)$ is $k$-synchronized.
\label{scthm}
\end{theorem}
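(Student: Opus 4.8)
The plan is to show $\rho_{\bf x}(n)$ is $k$-synchronized by expressing it in terms of quantities that we already know to be computable by finite automata, and crucially to exploit the "bounded number of blocks" structure established in Theorem~\ref{luke} and its Corollary. The key conceptual fact is that $\rho_{\bf x}(n)$ equals the number of novel occurrences of length-$n$ factors, i.e.\ $|E_{\bf x}(n)|$. Since $\bf x$ is $k$-automatic, so is $\rho_{\bf x}(n) = O(n)$ by Cobham, and hence by the Corollary there is a constant $C$ such that $E_{\bf x}(n)$ consists of at most $C$ contiguous blocks of integers. The strategy is to describe each such block by its left and right endpoints, show these endpoints are themselves given by finitely many synchronized (indeed automatic-graph) functions of $n$, and then recover $\rho_{\bf x}(n)$ as a bounded sum of block lengths.

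First I would recall, following Charlier, Rampersad, and Shallit, that the "novel occurrence" predicate in~\eqref{pred} is accepted by an explicitly constructable DFA over $\Sigma_k \times \Sigma_k$ reading $(n,i)_k$; this is the automaton for $\{(n,i)_k : i \in E_{\bf x}(n)\}$. From this single automaton I would derive, for each fixed bound $C$, formulas for the endpoints of the blocks. Concretely, $i$ is the \emph{left} endpoint of a block iff $i \in E_{\bf x}(n)$ but $i-1 \notin E_{\bf x}(n)$ (or $i=0$), and $i$ is the \emph{right} endpoint iff $i \in E_{\bf x}(n)$ but $i+1 \notin E_{\bf x}(n)$. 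Each of these is again a predicate of the form handled by the decision procedure, so the set $\{(n,i)_k : i \text{ is a left endpoint of a block in } E_{\bf x}(n)\}$ is recognized by a DFA, and similarly for right endpoints. Because there are at most $C$ blocks, I can define, for each $1 \le r \le C$, the functions $\ell_r(n)$ and $m_r(n)$ giving the left and right endpoints of the $r$-th block (ordered by position), with suitable conventions when fewer than $r$ blocks exist. The predicate "$i$ is the left endpoint of the $r$-th block" can be written using the block-endpoint automaton together with counting-to-$C$ logic — there are exactly $r-1$ left endpoints strictly smaller than $i$ — and since $r$ is bounded by the constant $C$, this bounded counting is expressible within the same automatic framework.

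Having obtained that each $\ell_r$ and $m_r$ has $k$-automatic graph (hence each is $k$-synchronized), I would then write
\[
\rho_{\bf x}(n) \;=\; |E_{\bf x}(n)| \;=\; \sum_{r=1}^{C} \bigl(m_r(n) - \ell_r(n) + 1\bigr),
\]
where empty blocks contribute $0$ under the chosen convention. The final step is to verify that $k$-synchronized sequences are closed under the operations used here: subtraction, adding constants, and summing a \emph{fixed finite number} $C$ of synchronized sequences. Each of these closure properties follows from the standard automaton product construction — to recognize $\{(n, g(n)+h(n))_k\}$ from automata for $g$ and $h$ one guesses the value of each summand and checks the addition relation, which is itself a $k$-synchronized (indeed regular) relation — so the bounded sum above is $k$-synchronized, giving the result.

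The main obstacle I anticipate is the bookkeeping needed to \emph{identify and order} the blocks by a bounded predicate, rather than the closure properties, which are routine. Specifically, pinning down "the $r$-th block" requires counting endpoints up to the threshold $r \le C$, and one must argue carefully that this bounded counting — together with the endpoint conditions $i \in E_{\bf x}(n)$, $i-1 \notin E_{\bf x}(n)$ — stays within the class of predicates for which the automaton construction of~\cite{Charlier&Rampersad&Shallit:2011} applies. The real content, that only finitely many block-functions are ever needed, is exactly what the Corollary supplies; once the constant $C$ is fixed, everything downstream is a finite, uniform construction.
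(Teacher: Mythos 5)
Your proposal is correct and follows essentially the same route as the paper's proof: both rest on the bounded-block corollary to Theorem~\ref{luke} together with the automaton-recognizability of the novel-occurrence predicate~(\ref{pred}), and both recover $\rho_{\bf x}(n)$ as the sum of the lengths of at most $C$ blocks checked by a finite automaton. The only difference is organizational --- you define the block endpoints as explicit synchronized functions $\ell_r(n), m_r(n)$ and invoke closure under bounded sums, whereas the paper existentially ``guesses'' the endpoints inside a single predicate --- which amounts to the same automaton construction.
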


\begin{proof}
Following \cite{Charlier&Rampersad&Shallit:2011}, 
it suffices to show how to accept the
language
$$\{ (n, m)_k \ : \ n \geq 0 \text{ and }  m = \rho_{\bf x} (n) \rbrace$$
with a finite automaton.  Here is a sketch of the argument.
From our results above, we know that there
is a finite constant $C \geq 1$ such that the number of contiguous
blocks in any row of the factor evolution diagram is bounded by $C$.
So we simply ``guess'' the endpoints of every block and then verify
that each factor of length $n$ starting at the positions inside blocks
is a novel occurrence, while all other factors are not.  Finally, we verify that
$m$ is the sum of the sizes of the blocks.

To fill in the details, we observe above in (\ref{pred}) that the predicate
``the factor of length $n$ beginning at position $i$ of $\bf x$ is a
novel occurrence'' is solvable by a finite automaton.  Similarly,
given endpoints $a, b$ and $n$, the predicates ``every factor of length $n$ 
beginning at positions $a$ through $b$ is a novel occurrence''
and ``no factor of length $n$ beginning at positions $a$ through $b$
is a novel occurrence'' are also solvable by a finite automaton.
The length of each block is just $b-a+1$, and it is easy to create
an automaton that will check if the sums of the lengths of the blocks
equals $m$, which is supposed to be $\rho_{\bf x} (n)$.
\end{proof}

Applying Theorem~\ref{calc} we get

\begin{corollary}
Given a $k$-automatic sequence $\bf x$, there is an algorithm that, on input
$n$ in base $k$, will produce $\rho_{\bf x} (n)$ in base $k$ in time 
$O(\log n)$. 
\end{corollary}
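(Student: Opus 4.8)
The plan is to obtain the corollary by directly composing the two main results already in hand. First I would apply Theorem~\ref{scthm}: since $\bf x$ is assumed $k$-automatic, its subword complexity function $\rho_{\bf x}(n)$ is $k$-synchronized. By definition this furnishes a DFA $M$ accepting $\{ (n, \rho_{\bf x}(n))_k : n \geq 0 \}$, which is precisely the object required as input to the algorithmic result.

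Next I would feed $M$ into Theorem~\ref{calc}. Applied to the sequence $(\rho_{\bf x}(n))_{n \geq 0}$, that theorem yields an algorithm which, given the base-$k$ representation of $n$, outputs the base-$k$ representation of $\rho_{\bf x}(n)$. Since the input to the present statement is exactly $n$ in base $k$, the hypotheses align and the produced value is $\rho_{\bf x}(n)$ in base $k$, as required.

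It remains only to confirm the running time, which is where the slight care is needed. The input $(n)_k$ has length $\Theta(\log n)$. The algorithm of Theorem~\ref{calc} forms $O(N)$ copies of the reversed transition graph of $M$, where $N = O(\log n)$ bounds the length of the (possibly zero-padded) first component, a bound supplied by the linear growth estimate $\rho_{\bf x}(n) = O(n)$ of result (a); it then runs a single depth-first search over a graph with $O(N)$ vertices and edges. Hence the whole computation costs $O(\log n)$.

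Beyond this there is no real obstacle: the corollary is a formal consequence of Theorems~\ref{scthm} and~\ref{calc}, with all the substantive content---building the synchronizing automaton and the backward-tracing search---already carried out in their proofs. The only thing worth checking is that the automaton guaranteed by the former is exactly the DFA consumed by the latter, namely a machine for the graph of $\rho_{\bf x}$, which is immediate from the definition of a $k$-synchronized sequence.
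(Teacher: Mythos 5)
Your proposal is correct and matches the paper exactly: the paper derives this corollary simply by applying Theorem~\ref{calc} to the synchronizing automaton furnished by Theorem~\ref{scthm}, which is precisely your composition. Your extra check that the DFA produced by the former is the object consumed by the latter, and the $O(\log n)$ accounting, are accurate restatements of what Theorem~\ref{calc}'s proof already provides.
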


As another application, we can recover and improve some recent results of
Goldstein \cite{Goldstein:2009,Goldstein:2011}.  He showed how to 
compute the quantities $\limsup_{n \geq 1} \rho_{\bf x} (n)/n$ and
$\liminf_{n \geq 1} \rho_{\bf x} (n)/n$  for the special case
of $k$-automatic sequences that are the fixed points of $k$-uniform
morphisms related to certain groups.
Corollary~\ref{ss} below generalizes
these results to all $k$-automatic sequences.

\begin{corollary}
There is an algorithm, that,
given a $k$-automatic sequence $\bf x$, will compute
$\sup_{n \geq 1} \rho_{\bf x} (n)/n$, 
$\limsup_{n \geq 1} \rho_{\bf x} (n)/n$,
and $\inf_{n \geq 1} \rho_{\bf x} (n)/n$,
$\liminf_{n \geq 1} \rho_{\bf x} (n)/n$.
\label{ss}
\end{corollary}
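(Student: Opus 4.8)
The plan is to work directly with the synchronized automaton $M$ over $\Sigma_k \times \Sigma_k$ provided by Theorem~\ref{scthm}, which recognizes $\{(n, \rho_{\bf x}(n))_k : n \ge 0\}$, and to reduce each of the four quantities to decidable questions about regular languages. The basic primitive is that for any fixed integers $p, q$ the relation $\{(n,m)_k : qm \ge pn\}$ is itself recognized by a finite automaton (the usual base-$k$ construction for linear comparisons, already available in the logical framework of \cite{Charlier&Rampersad&Shallit:2011}). Intersecting this with $M$, I can decide for every rational $r = p/q$ both ``$\exists n \ge 1 : \rho_{\bf x}(n)/n \ge r$'' (nonemptiness) and, by prefixing the quantifier block $\forall N\, \exists n > N$, the statement ``$\rho_{\bf x}(n)/n \ge r$ holds for infinitely many $n$.'' These yield decision procedures comparing each of $\sup, \inf, \limsup, \liminf$ of $\rho_{\bf x}(n)/n$ against an arbitrary rational threshold.

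First I would compute $L = \limsup_{n} \rho_{\bf x}(n)/n$ and $\ell = \liminf_n \rho_{\bf x}(n)/n$. Granting (see below) that both are rational, each is found effectively by enumerating rationals $r$ and testing the exact-equality condition: $L = r$ precisely when ``$\rho_{\bf x}(n)/n \ge r$ infinitely often'' holds while ``$\rho_{\bf x}(n)/n > r$ infinitely often'' fails. The enumeration halts once it reaches the true value, and crucially no a priori bound on the denominator is required --- only rationality.

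For the supremum I use the observation that if $\sup > \limsup$ then the supremum is attained at a finite $n$, since any maximizing sequence tending to infinity would force $\limsup = \sup$. Thus I decide whether some $n$ satisfies $\rho_{\bf x}(n)/n > L$; if not, then $\sup = L$, and if so I fix a witness $n_0$ and set $r = \tfrac12\bigl(L + \rho_{\bf x}(n_0)/n_0\bigr)$, so that $L < r \le \sup$. Because $\limsup < r$, a bound $N$ with ``$\forall n \ge N : \rho_{\bf x}(n)/n < r$'' exists and can be located by testing $N = 1, 2, 3, \ldots$; the maximizer then lies below $N$, whence $\sup = \max_{1 \le n < N} \rho_{\bf x}(n)/n$, a finite maximum whose terms are computable by Theorem~\ref{calc}. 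The infimum is symmetric, the only point to note being that for eventually periodic $\bf x$ one simply gets $\inf = \liminf = 0$.

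The heart of the matter, and the step I expect to be the main obstacle, is the rationality of $L$ and $\ell$. I would establish it by a most-significant-digit argument: since a synchronized pair is read with both tracks of equal length, $\rho_{\bf x}(n)/n$ equals the ratio $\bigl(\sum_i b_i k^{-i}\bigr)/\bigl(\sum_i a_i k^{-i}\bigr)$ of the two digit streams $a_i, b_i$ read from the most significant end, so the top $p$ digit-pairs fix the ratio up to a factor $1 + O(k^{-p})$. Consequently $L$ is the supremum, over infinite digit-pair streams labelling valid paths of $M$, of this ``base-$k$ ratio,'' an optimization of a positionally discounted ratio over the finite graph of $M$. The crux is to show this optimum is attained by an ultimately periodic stream: a cut-and-paste exchange argument should show that an optimal path can be replaced by one that repeats its best cycle, yielding a value of the form $\bigl(m_P(k^\lambda - 1) + W\bigr)/\bigl(n_P(k^\lambda - 1) + V\bigr)$ for a cycle of length $\lambda$, hence a rational number. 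Making this exchange precise --- and thereby ruling out that the limit points of $\rho_{\bf x}(n)/n$ accumulate at an irrational value --- is where the real work lies.
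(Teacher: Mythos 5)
Your reduction of $\sup$ and $\inf$ to $\limsup$ and $\liminf$ is sound, and the primitive you start from (regularity of $\{(n,m)_k : qm \geq pn\}$ intersected with the synchronized automaton) is fine. But the step that computes $L = \limsup_n \rho_{\bf x}(n)/n$ is genuinely broken, and it is exactly the step that carries the content of the corollary. Your test ``$L = r$ iff $\rho_{\bf x}(n)/n \geq r$ infinitely often and $\rho_{\bf x}(n)/n > r$ only finitely often'' is sufficient for $L = r$ but not necessary: the limsup can be approached strictly from below (then the first conjunct fails at $r = L$) or strictly from above (then the second fails). Since your argument uses nothing about $\rho_{\bf x}$ beyond $k$-synchronization, consider the $k$-synchronized function $f(n) = n-1$: here $\limsup f(n)/n = 1$, yet $f(n)/n \geq 1$ never holds and $f(n)/n > r$ holds for all $n$ large whenever $r < 1$, so \emph{no} rational passes your test and the enumeration runs forever. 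Relatedly, your claim that ``no a priori bound on the denominator is required --- only rationality'' is false: the correct characterization of $L = r$ is a countable conjunction of decidable statements (co-semi-decidable), and one can semi-decide $L < r$ and $L > r$ but not $L = r$; mere rationality of a computable real does not let you output it exactly. You need either an effective denominator bound or, better, a finite computable set of candidate values.

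That missing ingredient is precisely what your deferred ``cut-and-paste'' sketch would have to supply, and it is what the paper gets for free by citation: the proof in the paper is two lines, constructing the synchronized automaton via Theorem~\ref{scthm} and then invoking \cite{Shallit:2011,Schaeffer&Shallit:2012}, where the $\limsup$ is identified with the largest ``special point'' and computed directly from the cycle structure of the automaton. There, each reachable-and-pumpable cycle of the automaton contributes a rational value of exactly the shape you wrote down, $\bigl(m_P(k^\lambda - 1) + W\bigr)/\bigl(n_P(k^\lambda - 1) + V\bigr)$, the set of such values is finite with denominators effectively bounded in terms of the automaton, and $L$ is the maximum of the relevant ones. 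Once you have that finite candidate list (or even just the denominator bound), your comparison queries do suffice to pin down $L$ and $\ell$, and your $\sup$/$\inf$ reduction then finishes the proof. So the architecture is salvageable, but as written the proposal both defers the real work (by your own admission) and couples it to an enumeration procedure that would not terminate even if that work were done.
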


\begin{proof}
We already showed how to construct an automaton accepting
$\lbrace (n, \rho_{\bf x} (n))_k \ : \ n \geq 1 \rbrace$.
Now we just use the results from \cite{Shallit:2011,Schaeffer&Shallit:2012}.
Notice that the
$\limsup$ corresponds to what is called the largest ``special point'' in
\cite{Schaeffer&Shallit:2012}.
\end{proof}  

\begin{example}
Continuing our example of the Thue-Morse sequence,
Figure~\ref{fig2} displays a DFA accepting
$$ \{  (n, \rho_{\bf t} (n))_k  \ : \ n \geq 0 \} .$$
Inputs are given with the most significant digit first; the ``dead'' state
and transitions leading to it are omitted.
\end{example}

Given an infinite word $\bf x$,
we can also count the number of contiguous blocks in each $E_{\bf x}(n)$ for $n \geq 0$.
(For the Thue-Morse sequence this gives the sequence
$1,1,2,1,3,1,5,3,3,1, \ldots$.)  If $\bf x$ is $k$-automatic, then
this sequence is also, as the following theorem shows:

\begin{theorem}
If $\bf x$ is $k$-automatic then the sequence $(e(n))_{n \geq 0}$ counting
the number of contiguous blocks in the $n$'th step $E_{\bf x}(n)$ of the
evolution of novel occurrences of factors in $\bf x$ is also $k$-automatic.
\end{theorem}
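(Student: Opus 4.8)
The plan is to leverage two facts already established: first, the Corollary following Theorem~\ref{luke} gives a constant $C$ with $e(n)\le C$ for all $n$, so the sequence $(e(n))_{n\ge 0}$ takes values in the finite alphabet $\{0,1,\dots,C\}$; second, the discussion following (\ref{pred}) (and used in the proof of Theorem~\ref{scthm}) shows that the predicate $\mathrm{Novel}(n,i)$, asserting that the factor of length $n$ beginning at position $i$ is a novel occurrence, is recognized by a finite automaton reading $(n,i)_k$. Since $e(n)$ has bounded range, it is $k$-automatic exactly when each of its fibers $\{(n)_k \ : \ e(n)=v\}$, for $v\in\{0,1,\dots,C\}$, is a regular language; combining the corresponding DFAs by a product construction then yields a single automaton with outputs that emits $e(n)$ on input $(n)_k$. (This is why we obtain the stronger conclusion of $k$-automaticity rather than merely $k$-synchronization: the range is $O(1)$, not just $O(n)$.)

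The core step is to express ``$e(n)=v$'' as a predicate in the logical fragment of \cite{Charlier&Rampersad&Shallit:2011}: integer addition and subtraction, comparisons, Boolean connectives, quantifiers, and the automatic predicate $\mathrm{Novel}$. A contiguous block is a maximal run of consecutive novel positions, so I would introduce $\mathrm{Block}(n,a,b)$, defined by
$$a\le b \;\wedge\; \forall j\,\bigl((a\le j\le b)\to \mathrm{Novel}(n,j)\bigr) \;\wedge\; (a=0 \vee \neg\mathrm{Novel}(n,a-1)) \;\wedge\; \neg\mathrm{Novel}(n,b+1),$$
which asserts that $[a,b]$ is precisely one maximal block. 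Then ``$e(n)\ge v$'' is
$$\exists a_1,b_1,\dots,a_v,b_v\ \Bigl(\bigwedge_{i=1}^{v}\mathrm{Block}(n,a_i,b_i)\Bigr)\wedge\Bigl(\bigwedge_{i=1}^{v-1} b_i<a_{i+1}\Bigr),$$
and finally ``$e(n)=v$'' is the conjunction of ``$e(n)\ge v$'' with the negation of ``$e(n)\ge v+1$''. The essential point is that, because $v$ never exceeds the constant $C$, all of these formulas use only a bounded number (at most $2C+2$) of quantified variables, so they remain inside the fragment whose defining sets are automatic.

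By the synthesis result of \cite{Charlier&Rampersad&Shallit:2011}, each such predicate in the single free variable $n$ is accepted by an explicitly constructible DFA, so every fiber $\{(n)_k \ : \ e(n)=v\}$ is regular; since these fibers partition the inputs and the range is finite, $(e(n))_{n\ge 0}$ is $k$-automatic. The construction is uniform in $n$ and already handles $n=0$ correctly: for $n=0$ the predicate $\mathrm{Novel}(0,i)$ reduces to $i=0$ (the empty word has its unique novel occurrence at the origin), so $E_{\bf x}(0)=\{0\}$ and $e(0)=1$, consistent with the Thue--Morse data above.

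The step I expect to be the main obstacle is verifying that $\mathrm{Block}$ together with the ordering constraints faithfully enumerates \emph{maximal} runs. One must check that the clause $\neg\mathrm{Novel}(n,b_i+1)$ forces $b_i<a_{i+1}$ to yield a genuine non-novel gap, so that $a_{i+1}=b_i+1$ is impossible and two adjacent intervals can never be miscounted as two distinct blocks; dually, the boundary clauses must correctly treat the leftmost block (allowing $a_1=0$) and the rightmost block. Since for each fixed $n$ there are only finitely many novel positions (exactly $\rho_{\bf x}(n)$ of them), the universal quantifier over $j$ poses no difficulty. Once this bookkeeping is pinned down, the automaticity follows mechanically from the framework.
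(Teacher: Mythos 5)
Your proposal is correct and follows essentially the same route as the paper: both use the corollary bounding the number of blocks by a constant $C$, then for each value $v \leq C$ build an automaton for the fiber $\{(n)_k : e(n) = v\}$ by existentially guessing block endpoints and checking novelty via the automatic predicate from (\ref{pred}). Your maximality-based $\mathrm{Block}$ predicate with the ``$\geq v$ and not $\geq v{+}1$'' trick is just a minor bookkeeping variant of the paper's ``guess the $i$ blocks and verify all other positions are non-novel.''
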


\begin{proof}
Since we have already shown that the number of contiguous blocks is bounded
by some constant $C$
if $\bf x$ is $k$-automatic, it suffices to show for each $i \leq C$
we can create an automaton
to accept the language
$\{ (n)_k \ : \ E_{\bf x}(n) \text{ has exactly $i$ contiguous blocks } \}$.
To do so, on input $n$ in base $k$ we guess the endpoints of the
$i$ contiguous nonempty blocks, verify that the length-$n$ occurrences
at those positions are novel, and that all other occurrences are not novel.
\end{proof}

\begin{figure}[H]
\centering
\includegraphics[scale=0.3]{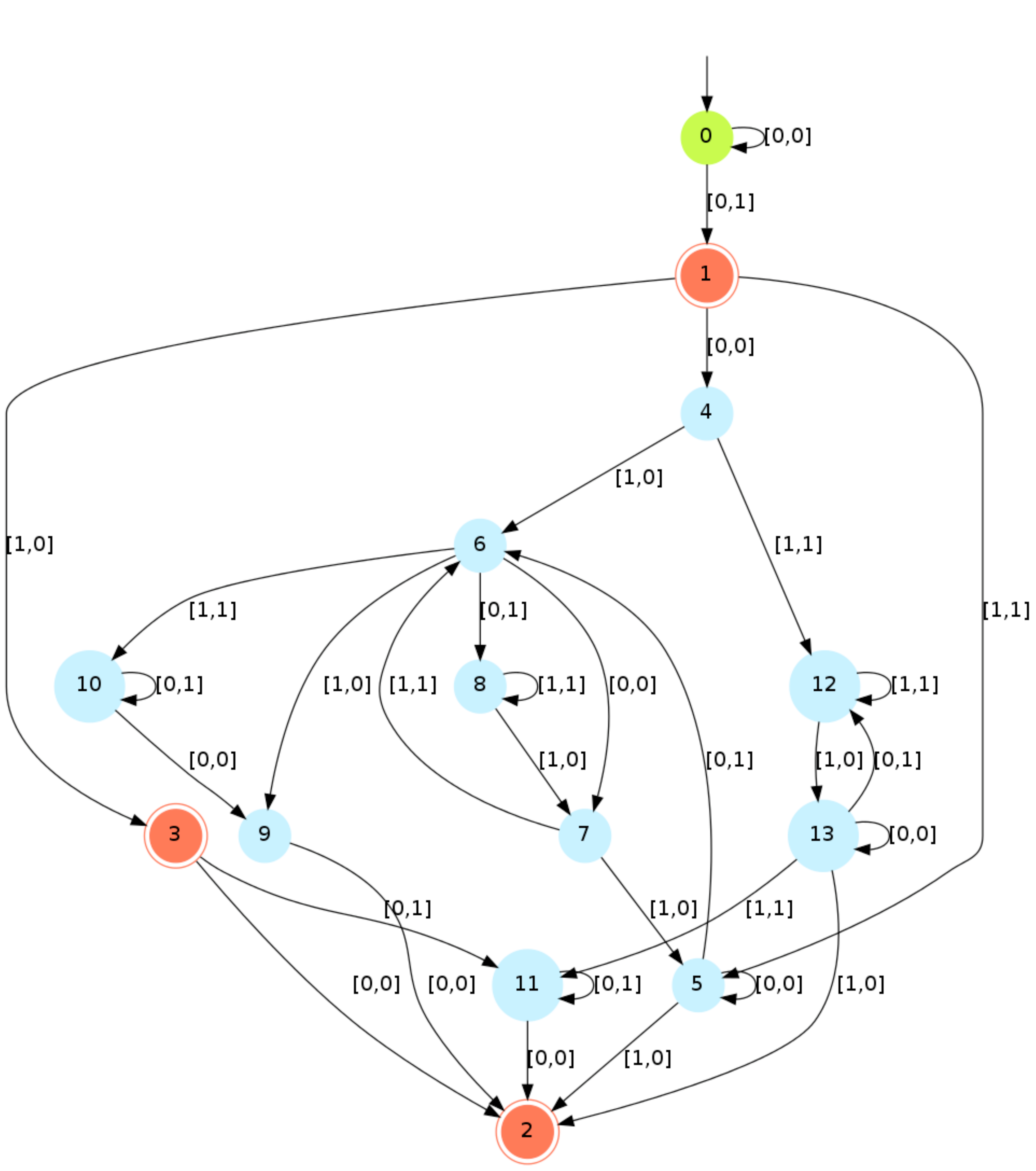}
\caption{Automaton computing the subword complexity of the Thue-Morse sequence}
\protect\label{fig2}
\end{figure}

\begin{example}
Figure~\ref{fig3} below gives the automaton computing the number $e(n)$
of contiguous blocks of novel occurrences of length-$n$ factors
for the Thue-Morse sequence.
Here is a brief table:
\begin{table}[H]
\begin{center}
\begin{tabular}{|c|r|r|r|r|r|r|r|r|r|r|r|r|r|r|r|}
\hline
$n$ & 0 & 1 & 2 & 3 & 4 & 5 & 6 & 7 & 8 & 9 & 10 & 11 & 12 & 13 & 14  \\
\hline
$e(n)$ & 1 & 1 & 2 & 1 & 3 & 1 & 5 & 3 & 3 & 1 & 5 & 5 & 5 & 3 & 3 \\
\hline
\end{tabular}
\end{center}
\end{table}
\end{example}

\section{Implementation}

We wrote a program that, given an automaton generating
a $k$-automatic sequence $\bf x$,
will produce a deterministic finite automaton
accepting the language $ \{ (n, \rho_{\bf x} (n))_k \ : \ n \geq 0 \}$.
We used the following variant which does not require advance
knowledge of the bound on the first difference of $\rho_{\bf x} (n)$:
\begin{enumerate}
\item Construct an automaton $R$ that accepts $(n,s,e,\ell)$ if,
for factors of length $n$,
the next contiguous block of novel occurrences
after position $s$ ends at position $e$ and has length
$\ell$. If there are no blocks past $s$, accept $(n,s,s,0)$.

\item Construct an automaton $M_0$ that accepts $(n, 0, 0)$.

\item Construct an automaton $M_{j+1}$ that accepts $(n, S, e)$
if there exist $s$ and $S'$ such that
\begin{itemize}
\itemindent 10pt
\item[(i)] $M_j$ accepts $(n, S', s)$ 
\item[(ii)] $R$ accepts $(n, s, e, S - S')$.
\end{itemize}
\item If $M_{j+1} = M_j$ then we are done. We create an automaton that accepts 
$(n, S)$ if there exists $e$ such that $M_j$ accepts $(n, S, e)$.
\end{enumerate}

\begin{figure}[H]
\centering
\includegraphics[width=10cm]{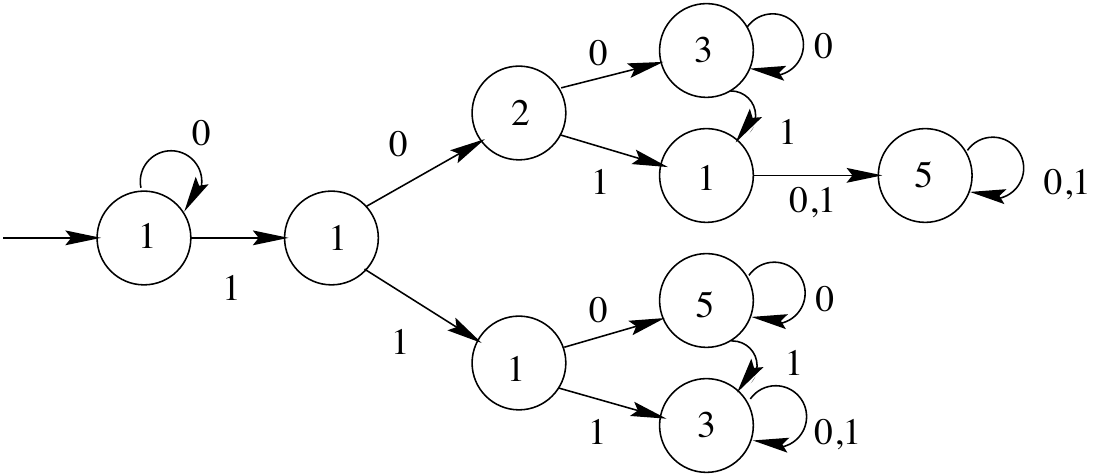}
\caption{Automaton computing number of contiguous blocks of novel
occurrences of length-$n$ factors in the Thue-Morse sequence}
\protect\label{fig3}
\end{figure}

Besides the automaton depicted in Figure~\ref{fig1}, we ran our program
on the paperfolding sequence \cite{Dekking&MendesFrance&vanderPoorten:1982}
and the so-called ``period-doubling sequence'' \cite{Damanik:2000}.
The results are depicted below in Figures~\ref{fig4} and \ref{fig5}.

\section{Powers and primitive words}

Let $w$ be a nonempty word.  We say $w$ is a {\it power} if there
exists a word $x$ and an integer $k \geq 2$ such that $w = x^k$;
otherwise we say $w$ is {\it primitive}.
Given a word $z$, there is a unique way to write it as $y^i$, where
$y$ is primitive and $i$ is an integer $\geq 1$; this $y$ is called
the {\it primitive root} of $z$.  Thus, for example, the primitive
root of {\tt murmur} is {\tt mur}.

We say $w = a_1 \cdots a_n$ has a {\it period\/} $p$ if $a_i = a_{i+p}$
for $1 \leq i \leq n-p$.  Thus, for example, {\tt alfalfa} has period $3$.
It is easy to see that a word $w$ is a power
if and only if it has a period $p$ such that $p < |w|$ and $p \ | \ |w|$.

Two finite words $x, y$ are conjugates if one is a cyclic shift of
the other; in other words, if there exist words $u, v$ such that
$x = uv$ and $y = vu$.  For example, {\tt enlist} is a conjugate of
{\tt listen}.  As is well-known, every conjugate of a power of 
a word $x$ is a power of a conjugate of $x$.  The lexicographically
least conjugate of a primitive word is called a {\it Lyndon word}.
We call the lexicographically least conjugate of the primitive root of
$x$ the {\it Lyndon root} of $x$.

The following lemma says that if we consider the starting positions of
length-$n$ powers in a word $x$, then there must be large gaps
between contiguous blocks of such starting positions.

\begin{lemma}
Let $z$ be a finite or infinite word, and let $n \geq 2$ be an
integer.  Suppose there exist integers $i, j$ such that
\begin{itemize}
\itemindent 10pt
\item[(a)] $w_1 := z[i..i+n-1]$ is a power;
\item[(b)] $w_2 := z[j..j+n-1]$ is a power;
\item[(c)] $i < j \leq i + n/3$.
\end{itemize}
Then $z[t..t-n-1]$ is a power for $i \leq t \leq j$.
Furthermore, if $x_1$ is the Lyndon root of $w_1$, then
$x_1$ is also the Lyndon root of each word $z[t..t-n-1]$.
\label{gocl}
\end{lemma}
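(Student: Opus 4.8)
The plan is to exploit the large overlap between the two length-$n$ windows that condition (c) forces, and then show that all the intermediate windows share one and the same period dividing $n$. Write $d = j - i$, so that $1 \le d \le n/3$ by (c). Since $w_1$ and $w_2$ are powers, let $p_1$ and $p_2$ be their least periods, equivalently the lengths of their primitive roots $u_1$ and $u_2$; because each is a proper divisor of $n$ we have $p_1, p_2 \le n/2$. The common region of the two windows is $O := z[j..i+n-1]$, which has length $n - d \ge 2n/3$, and $O$ inherits the period $p_1$ from $w_1$ and the period $p_2$ from $w_2$.

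The first real step is to apply the Fine--Wilf theorem to $O$. For this I need $|O| \ge p_1 + p_2 - \gcd(p_1,p_2)$, and since $|O| \ge 2n/3$ it suffices to prove the purely arithmetic inequality $p_1 + p_2 - \gcd(p_1,p_2) \le 2n/3$ for proper divisors $p_1, p_2$ of $n$. I would prove this by writing $g = \gcd(p_1,p_2)$, $p_1 = ga$, $p_2 = gb$ with $\gcd(a,b) = 1$ and $a \ge b$, and splitting into the case $b = 1$ (where $p_2 \mid p_1$ and the left side is just $p_1 \le n/2$) and the case $b \ge 2$ (which forces $a \ge 3$). In the latter case $\mathrm{lcm}(p_1,p_2) = gab$ divides $n$, so $g \le n/(ab)$ and hence $p_1 + p_2 - g = g(a+b-1) \le n(a+b-1)/(ab) \le 2n/3$, the fraction $(a+b-1)/(ab) = \tfrac1a + \tfrac1b - \tfrac1{ab}$ being decreasing in both variables and therefore maximized at $(a,b) = (3,2)$. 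This is the one genuinely computational point, and it is exactly where the hypothesis $j \le i + n/3$ enters; the bound is tight, for instance at $p_1 = n/2$, $p_2 = n/3$. Fine--Wilf then shows that $O$ has period $g = \gcd(p_1,p_2)$.

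Next I would upgrade this to the equality $p_1 = p_2 = g$. The prefix $z[j..j+p_1-1]$ of $O$ is a length-$p_1$ factor of $w_1 = u_1^{k_1}$, hence a cyclic conjugate of the primitive root $u_1$, and therefore itself primitive; but it inherits the period $g \mid p_1$ from $O$, and a word of length $p_1$ with a period $g \mid p_1$ can be primitive only if $g = p_1$. The symmetric argument applied to $w_2$ gives $g = p_2$. Thus $w_1$ and $w_2$ have a common least period $p := p_1 = p_2 = g$, with $p \mid n$ and $p < n$.

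Finally I would merge the periodicities. The windows $z[i..i+n-1]$ and $z[j..j+n-1]$ both have period $p$ and overlap in $O$, whose length $n - d \ge 2n/3 \ge p$ is at least $p$; the standard gluing argument (for a pair $k, k+p$ inside the merged interval, at least one of the two windows contains both positions once the overlap is $\ge p$) then shows that the whole block $R := z[i..j+n-1]$ has period $p$. Every intermediate window $z[t..t+n-1]$ with $i \le t \le j$ is a factor of $R$, so it has period $p$ with $p \mid n$ and $p < n$, and is therefore a power. Its primitive root is the length-$p$ factor $z[t..t+p-1]$, which is a cyclic conjugate of $u_1$; since the Lyndon root is the least conjugate of the primitive root and all these roots are conjugate, they share a single Lyndon root, namely $x_1$. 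The main obstacle is the arithmetic inequality of the second paragraph, which is precisely what explains why the constant $1/3$ (and not something larger) appears in hypothesis (c); everything else is bookkeeping with periods together with the Fine--Wilf theorem.
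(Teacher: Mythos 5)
Your proof is correct, and at the top level it takes the same route as the paper's: apply Fine--Wilf to the overlap $O = z[j..i+n-1]$, whose length is at least $2n/3$, conclude the two primitive roots are conjugate, and propagate this to all intermediate windows. But the execution differs in a way worth noting. The paper splits into cases according to whether $|x_1| > |x_2|$, $|x_1| < |x_2|$, or $|x_1| = |x_2|$, and in the unequal cases asserts $|v| \geq 2n/3 \geq |x_1| + |x_2|$, i.e.\ it invokes the weak Fine--Wilf threshold. That inequality is not justified as stated: with $|x_1| \leq n/2$ and $|x_2| \leq n/3$ the sum can be as large as $5n/6$ (for instance $n = 6$, $|x_1| = 3$, $|x_2| = 2$). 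What actually saves the argument is the sharp threshold $p_1 + p_2 - \gcd(p_1,p_2)$, and your arithmetic lemma --- that $p_1 + p_2 - \gcd(p_1,p_2) \leq 2n/3$ for any two proper divisors $p_1, p_2$ of $n$, with equality exactly at $(n/2, n/3)$ --- is precisely the missing justification; it also explains why the constant $1/3$ in hypothesis (c) is best possible, matching the paper's subsequent remark. The remaining differences are organizational rather than substantive: instead of deriving a contradiction in the unequal-root cases, you prove directly that both least periods equal the gcd (using that a conjugate of a primitive word is primitive), and you spell out the period-gluing step showing that the whole block $z[i..j+n-1]$ has period $p$, which the paper compresses into the unexplained sentence ``It now follows that $z[t..t+n-1]$ is a conjugate of $w_1$ for every $t$.'' Both arguments reach the same conclusion, but yours is more self-contained and repairs the one quantitative gap in the paper's version.
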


\begin{figure}[H]
\centering
\includegraphics[width=7cm]{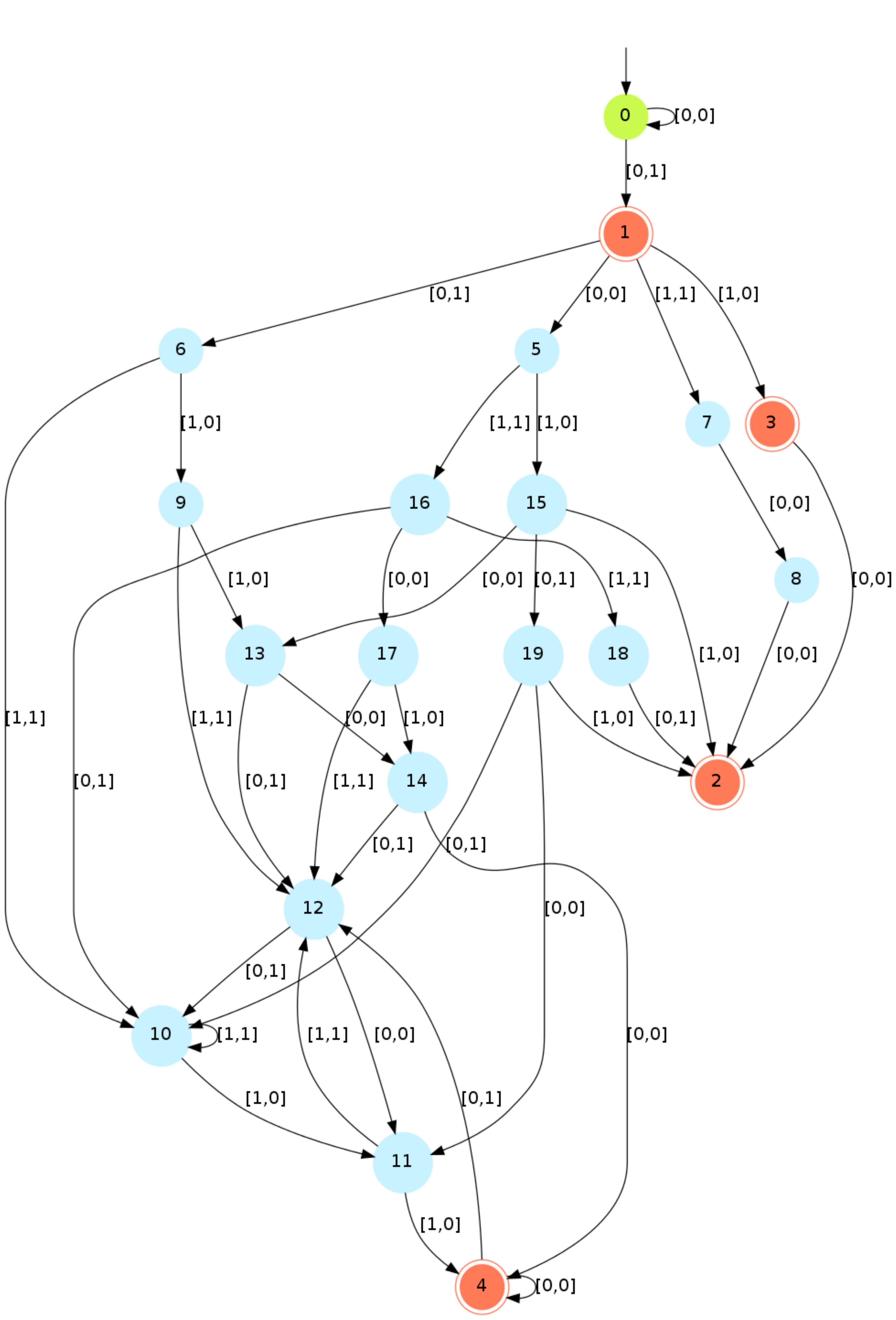}
\caption{Automaton computing the subword complexity of the paperfolding sequence}
\protect\label{fig4}
\end{figure}

\begin{figure}[H]\centering
\includegraphics[width=7cm]{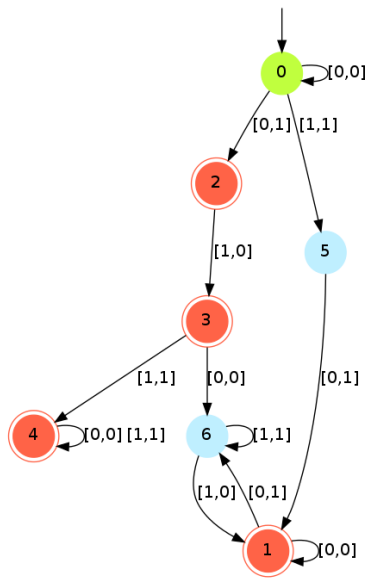}
\caption{Automaton computing
the subword complexity of the period-doubling sequence}
\protect\label{fig5}
\end{figure}

\begin{proof}
Let $x_1$ be the primitive root of $w_1$ and $x_2$ be the 
primitive root of $w_2$.  Since $x_1$ and $x_2$ are powers, there
exist integers $p_1, p_2 \geq 2$ such that $w_1 = x_1^{p_1}$
and $w_2 = x_2^{p_2}$.

Since $w_1$ and $w_2$ are both of length $n$, and since their starting
positions are related by $i < j \leq i + n/3$, it follows that the word
$v := z[j..i+n-1]$ is common to both $w_1$ and $w_2$, and $|v| = i+n-j
\geq i + 2n/3 + n/3 - j \geq 2n/3$.

Now there are three cases to consider:
\begin{itemize}
\itemindent 10pt
\item[(a)] $|x_1|>|x_2|$;
\item[(b)] $|x_1|<|x_2|$;
\item[(c)] $|x_1|=|x_2|$.
\end{itemize}

Case (a):  We must have $p_2 > p_1 \geq 2$, so $p_2 \geq 3$.
Since $v$ is a suffix of $w_1$, it has period
$|x_1| \leq n/2$.  Since $v$ is a prefix of $w_2$, it
has period $|x_2| \leq n/3$.   Then $|v| \geq 2n/3 \geq |x_1| + |x_2|$.
By a theorem of Fine and Wilf \cite{Fine&Wilf:1965},
it now follows that $v$, and hence $x_1$,
has period $p := \gcd(|x_1|,|x_2|) \leq |x_2| < |x_1|$.
Now $p$ is less than $|x_1|$ and also divides it, so this means
$x_1$ is a power, a contradiction, since we assumed $x_1$ is primitive.
So this case cannot occur.

Case (b) gives a similar contradiction.

Case (c):  We have $p_1 = p_2 \geq 2$.    Then the last occurrence
of $x_1$ in $w_1$ lies inside $x_2^2$, and so $x_1$ is a conjugate
of $x_2$.  Hence $w_1$ is a conjugate of $w_2$.
It now follows that $z[t..t+n-1]$ is a conjugate
of $w_1$ for every $t$,
$i \leq t \leq j$.  But the conjugate of a power is itself
a power, and we are done.
\end{proof}

\medskip

\begin{remark}
The bound of $n/3$ in the statement of Lemma~\ref{gocl} is 
best possible, as shown by the following class of examples.
Let $h$ be the morphism that maps $1 \rightarrow 21$ and
$2 \rightarrow 22$, and consider the word
$$ h^i (122122121212).$$
This word is of length $12 \cdot 2^i$, and
contains squares of
length $3 \cdot 2^{i+1}$ starting in the first $3 \cdot 2^i$ 
positions, and cubes of length
$3 \cdot 2^{i+1}$ ending in the last $2^i + 1$ positions.
This achieves a gap of $n/3 + 1$ infinitely often.
\end{remark}

\medskip

Now, given an infinite word $\bf x$, we define a function
$\alpha_{\bf x} (n)$, the {\it appearance function}, to be the 
least index $i$ such that every length-$n$ factor of $\bf x$
appears in the prefix ${\bf x}[0..i+n-1]$; see 
\cite[\S 10.10]{Allouche&Shallit:2003b}.

\begin{theorem}
If $\bf x$ is a $k$-automatic sequence, then
$\alpha_{\bf x} (n) = O(n)$.
\label{app-thm}
\end{theorem}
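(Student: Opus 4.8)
The plan is to identify the appearance function with the position of the \emph{last} novel occurrence, and then to show that this quantity is $k$-synchronized; the bound $O(n)$ then follows immediately from result (a) (\cite[Prop.\ 2.5]{Carpi&Maggi:2001}).

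First I would observe that $\alpha_{\bf x}(n) = \max E_{\bf x}(n)$. A length-$n$ factor appears in the prefix ${\bf x}[0..i+n-1]$ precisely when it has an occurrence starting at some position $\leq i$, so \emph{every} length-$n$ factor appears in this prefix exactly when the leftmost (that is, novel) occurrence of each such factor lies at a position $\leq i$; equivalently, when $E_{\bf x}(n) \subseteq \{0, 1, \ldots, i\}$. The least such $i$ is therefore $\max E_{\bf x}(n)$. Since $\rho_{\bf x}(n)$ is finite, $E_{\bf x}(n)$ is a finite set and this maximum exists, so $\alpha_{\bf x}$ is well defined.

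Next I would show that the graph $\{ (n,i)_k \ : \ i = \alpha_{\bf x}(n) \}$ is accepted by a finite automaton. Recall from (\ref{pred}) that, in the framework of \cite{Charlier&Rampersad&Shallit:2011}, the predicate ``the length-$n$ factor at position $i$ is a novel occurrence'' is recognizable by a DFA. The condition $i = \max E_{\bf x}(n)$ is then expressed by
$$ \bigl( i \in E_{\bf x}(n) \bigr) \ \wedge\ \bigl( \forall\, i' > i, \ i' \notin E_{\bf x}(n) \bigr), $$
which is built from the novel-occurrence predicate, its negation, a conjunction, and a single additional universal quantifier over $i'$ --- all operations under which this class of predicates is closed. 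Hence $\alpha_{\bf x}$ is $k$-synchronized, and applying result (a) yields $\alpha_{\bf x}(n) = O(n)$ at once. (One could of course also invoke Theorem~\ref{calc} to extract a fast evaluation algorithm, though only the growth bound is claimed here.)

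The step I expect to require the most care is justifying the unbounded universal quantifier $\forall\, i' > i$: one must confirm that the decision procedure of \cite{Charlier&Rampersad&Shallit:2011} admits quantification over an unbounded range, which it does, since it is the first-order theory of a decidable structure and is therefore closed under arbitrary quantification rather than merely bounded quantification. I would also remark why a purely combinatorial count does not suffice: the corollary bounding the number of contiguous blocks of $E_{\bf x}(n)$ by a constant $C$ does \emph{not} by itself bound $\max E_{\bf x}(n)$, because those $C$ blocks could, a priori, be separated by arbitrarily large gaps. It is precisely this gap that forces the synchronization argument above in place of a direct block-counting argument.
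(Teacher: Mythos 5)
Your proposal is correct and takes essentially the same approach as the paper: the paper also proves the theorem by showing $\alpha_{\bf x}$ is $k$-synchronized via a first-order predicate accepted by an automaton (its formulation --- ``every length-$n$ factor occurs at some position $j \leq m$, and the factor at position $m$ is novel'' --- is logically the same condition as your $m = \max E_{\bf x}(n)$), and then invokes \cite[Prop.~2.5]{Carpi&Maggi:2001} for the $O(n)$ bound. Your closing remarks (closure under unbounded quantification, and why constant block-count alone does not bound $\max E_{\bf x}(n)$) are accurate but not needed beyond what the paper itself assumes.
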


\begin{proof}
First, we show that the appearance function is $k$-synchronized.
It suffices to show that there is an automaton accepting
$\lbrace (n,m)_k \ : \  m = \alpha_{\bf x} (n) \rbrace$.
To see this, note that on input $(n,m)_k$ we can
check that
\begin{itemize}
\item for all $i \geq 0$ there exists $j$, $0 \leq j \leq m$
such that ${\bf x}[i..i+n-1] = {\bf x}[j..j+n-1]$; and 
\item for all $l < m$ we have ${\bf x}[m..m+n-1] \not= {\bf x}[l..l+n-1]$.
\end{itemize}
From \cite[Prop.\ 2.5]{Carpi&Maggi:2001}
we know $k$-synchronized functions are $O(n)$.
\end{proof}

As before, we consider maximal blocks of novel occurrences of
length-$n$ powers in $\bf x$.  Our goal is to prove 

\begin{lemma}
If $\bf x$ is $k$-automatic, then
there are only a constant number of such blocks.
\label{lnpow}
\end{lemma}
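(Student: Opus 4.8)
The plan is to combine the gap structure from Lemma~\ref{gocl} with the linear bound on the appearance function (Theorem~\ref{app-thm}), and then control, block by block, how the novel occurrences sit inside each block of power occurrences.

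First I would fix a constant $D$ with $\alpha_{\bf x}(n) \le Dn$, so that every novel occurrence of a length-$n$ factor---in particular every novel power occurrence---starts at a position in $[0, Dn]$. I then restrict attention to $P(n) := \{ t : {\bf x}[t..t+n-1] \text{ is a power} \} \cap [0, Dn]$. By Lemma~\ref{gocl}, any two elements of $P(n)$ within distance $n/3$ lie in a common run of power positions, so the maximal blocks of $P(n)$ are separated by gaps exceeding $n/3$; since they all fit in an interval of length $Dn$, there are at most $M := 3D + 1 = O(1)$ such blocks $B_1, \ldots, B_M$. Moreover, Lemma~\ref{gocl} guarantees that all positions in a single block $B_i = [a_i, b_i]$ share one Lyndon root $y_i$; writing $q_i = |y_i|$, I would show using Fine--Wilf \cite{Fine&Wilf:1965} that the underlying factor ${\bf x}[a_i .. b_i + n - 1]$ is purely $q_i$-periodic, since overlapping length-$n$ windows of period $q_i$ propagate the period across the whole block.

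The heart of the argument is then local to each block. Because the block is $q_i$-periodic, the factor ${\bf x}[t..t+n-1]$ depends only on $t \bmod q_i$, and since $y_i$ is primitive its $q_i$ rotations---hence the $q_i$ powers $w^{n/q_i}$ occurring in $B_i$---are pairwise distinct. Consequently a position $t \in B_i$ with $t - q_i \ge a_i$ can never be novel, so all novel positions in $B_i$ lie in $[a_i, a_i + q_i - 1]$, where the $q_i$ residues occur exactly once each. A residue there fails to be novel precisely when its associated power already occurred earlier, necessarily inside one of the at most $M - 1$ preceding blocks sharing the Lyndon root $y_i$. Each such earlier block covers a cyclic interval of rotations, so the set of already-seen residues is a union of at most $M-1$ cyclic arcs; the novel residues form the complementary at most $M-1$ arcs, which, read off along the line $[a_i, a_i + q_i - 1]$, constitute at most $M$ contiguous sub-blocks of novel positions inside $B_i$.

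Summing over the $M$ blocks, and noting that novel sub-blocks belonging to different $B_i$ are kept apart by the intervening gaps of non-power positions, the total number of blocks of novel length-$n$ power occurrences is at most $M^2 = O(1)$, as required. The main obstacle is the block-local counting: one must verify that the pattern of already-seen rotations really is a bounded union of cyclic arcs and then translate this cyclic picture back into a linear sub-block count. This is exactly where the constant bound $M$ on the number of power blocks---supplied jointly by Lemma~\ref{gocl} and Theorem~\ref{app-thm}---does the crucial work, since without it the number of earlier same-root blocks, and hence the number of sub-blocks per block, could not be controlled.
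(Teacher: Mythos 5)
Your proposal is correct, and its skeleton coincides with the paper's: Theorem~\ref{app-thm} confines all novel occurrences to an interval of length $O(n)$, Lemma~\ref{gocl} forces gaps of length $> n/3$ between maximal blocks of power positions (hence a constant number, your $M = 3D+1$, of such blocks), and within each block the length-$n$ powers are powers of rotations of a single Lyndon root, so that each block corresponds to a cyclic arc of rotations. The genuine difference is in the final counting step, which is the heart of the paper's argument. The paper processes all blocks sharing a common Lyndon root from left to right, setting $I'_j := I_j - (I_1 \cup \cdots \cup I_{j-1})$, and proves the amortized inequality~(\ref{ineq}), $\#I'_1 + \cdots + \#I'_n + \#(\bigcup_i I'_i) \leq 2n$, which bounds the total number of surviving sub-blocks by \emph{twice} the number of power blocks, yielding $6C$. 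You instead bound each block in isolation: the rotations already seen in earlier same-root blocks form at most $M-1$ cyclic arcs, so the novel rotations form at most $M-1$ complementary arcs, i.e.\ at most $M$ linear sub-blocks once the circle is cut, for a total of at most $M^2$. Your route avoids the amortization entirely (the bookkeeping behind Figure~\ref{fig9}) and is arguably easier to verify, at the price of a quadratic rather than linear constant---immaterial here, since the lemma only needs \emph{some} constant. One detail to patch: when a block is shorter than $q_i$, its positions realize only an arc $A$ of rotations rather than the full circle, and an earlier block's arc can meet $A$ in two pieces, so the per-block count there is $2M-1$ rather than $M$; the total remains $O(M^2)$, so nothing breaks. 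You also supply an explicit periodicity-propagation argument (via overlapping windows) justifying that the factor at position $t$ in a block depends only on $t \bmod q_i$, a point the paper leaves implicit.
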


\begin{proof}
To begin with, we consider maximal blocks of length-$n$ powers
in $\bf x$ (not considering whether they are novel occurrences).
From Theorem~\ref{app-thm} we know that every length-$n$ factor
must occur at a position $< Cn$, for some constant $C$
(depending on $\bf x$).  We first argue that the number of
maximal blocks of length-$n$ powers, up to the position of the
last length-$n$ power to occur for the first time,
is at most $3C$.    

Suppose there $\geq 3C+1$ such blocks.  Then Lemma~\ref{gocl}
says that any two such blocks must be separated by at least
$n/3$ positions.  So the first occurrence of the last factor
to occur occurs at a position $\geq (3C)(n/3) = Cn$,
a contradiction.  

So using a constant number of blocks, in which each position
of each block starts a length-$n$ factor that is a power,
we cover the starting positions of all such factors.  It now
remains to process these blocks to remove occurrences of 
length-$n$ powers that are not novel.

The first thing we do is remove from each block the positions
starting length-$n$ factors that have already occurred 
{\it in that block}.  This has the effect of truncating long
blocks.  The new blocks have the property that each factor
occurring at the starting positions in the blocks never appeared
before in that block.

Above we already proved that inside each block, the powers that
begin at each position
are all powers of some conjugate of a fixed Lyndon word.
Now we process the blocks associated with the same Lyndon root
together, from the first (leftmost) to the last.
At each step, we remove from the current block all the positions
where length-$n$ factors begin that have appeared in any previous block.
When all blocks have been processed, we need to see that there are
still at most a constant number of contiguous blocks remaining.

Suppose the associated Lyndon root is $y$, with $|y| = d$.  Each position
in a block is the starting position of a power of a conjugate of $y$, and
hence corresponds to 
a right rotation of $y$ by some integer  $i$, $0 \leq i < d$.  Thus
each block $B_j$ actually corresponds to some $I_j$ that is a contiguous
subblock of $0, 1, \ldots, d-1$
(thought of as arranged in a circle).  

As we process the blocks associated with $y$ from left to right 
we replace $I_j$ with $I'_j := I_j - (I_1 \ \cup \ \cdots \ \cup I_{j-1})$.
Now if $I \subseteq \lbrace 0, 1,\ldots, d-1 \rbrace$ is a union 
of contiguous subblocks, let $\#I$ be the number of contiguous subblocks
making up $I$.  We claim that
\begin{equation}
\#I'_1 + \#I'_2 + \cdots + \#I'_n + \#(\bigcup_{1 \leq i \leq n} I'_i) \leq 2n
.
\label{ineq}
\end{equation}

To see this, suppose that when we set $I'_n := I_n - (I_1 \ \cup \  \cdots \ \cup
I_{n-1})$, the subblock $I_n$ has an intersection with $t$ of
the lower-numbered subblocks.   Forming the union
$(\bigcup_{1 \leq i \leq n} I'_i)$ then obliterates $t$ subblocks and
replaces them with $1$.  But $I'_n$ has $t-1$ new subblocks, plus at most $2$
at either edge (see Figure~\ref{fig9}).  This means that the left side of
(\ref{ineq}) increases by at most $(1-t) + (t-1) + 2 = 2$.  Doing this
$n$ times gives the result.

\begin{figure}[H]
\begin{center}
\input 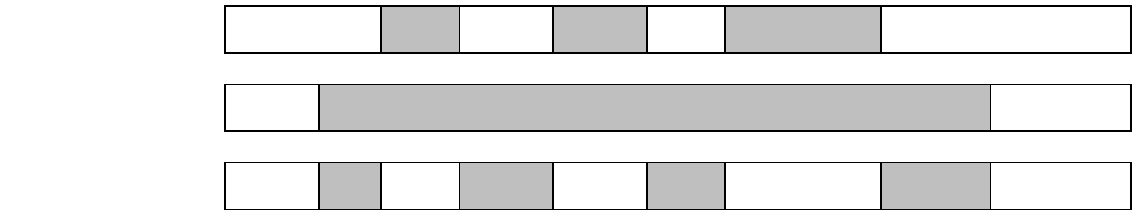_t
\end{center}
\caption{How the number of blocks changes}
\label{fig9}
\end{figure}

Now at the end of the procedure
there will be at least one interval in the union of all the $I_i$,
so $\#I'_1 + \#I'_2 + \cdots + \#I'_n \leq 2n-1$, and we have
proved (\ref{ineq}).

Earlier we showed that there are at most $3C$ maximal blocks of
length-$n$ powers, up to the position of the last length-$n$ power
to occur for the first time.  Then, after processing these blocks
to remove positions corresponding to factors that occurred earlier,
we will have at most $2(3C) = 6C$ blocks remaining.
\end{proof}

\begin{corollary}
If $\bf x$ is $k$-automatic, then 
\begin{itemize}
\item the function counting the number of distinct length-$n$ factors
that are powers is $k$-synchronized;
\item the function counting the number of distinct length-$n$ factors
that are primitive words is $k$-synchronized.
\end{itemize}
\end{corollary}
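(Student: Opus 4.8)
The plan is to follow the template of the proof of Theorem~\ref{scthm}. For the first bullet, observe that the number of distinct length-$n$ factors of $\bf x$ that are powers equals the number of \emph{novel} occurrences of length-$n$ powers, since each distinct power-factor has a unique leftmost occurrence. By Lemma~\ref{lnpow} these novel occurrences lie in at most a constant number $C'$ of contiguous blocks. So, exactly as in Theorem~\ref{scthm}, on input $(n,m)_k$ the automaton guesses the endpoints $a_1 \le b_1 < a_2 \le b_2 < \cdots$ of at most $C'$ blocks, verifies that every position inside a block is a novel occurrence of a length-$n$ power while no position outside any block is, and finally checks that $m$ equals the sum $\sum_t (b_t - a_t + 1)$ of the block lengths. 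Since guessing a bounded number of endpoints, testing a fixed predicate at ``all (resp.\ no) positions of $[a,b]$,'' and summing a bounded number of lengths are all operations already shown to be automaton-realizable in the proof of Theorem~\ref{scthm}, everything reduces to exhibiting a single automaton for the predicate ``the length-$n$ factor at position $i$ is a power and is a novel occurrence.''

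The hard part is the power predicate itself. The characterization recalled earlier --- that $w$ is a power iff it has a period $p$ with $p < |w|$ and $p \mid |w|$ --- is not directly usable, because the divisibility relation $p \mid n$ with both arguments free is not expressible using only addition, whereas the framework of \cite{Charlier&Rampersad&Shallit:2011} is essentially Presburger arithmetic together with indexing into $\bf x$. To get around this I would single out the \emph{smallest} period. Writing $\mathrm{Per}(i,n,d)$ for the (addition-and-indexing) predicate $\forall m\,(0 \le m < n-d \Rightarrow {\bf x}[i+m] = {\bf x}[i+m+d])$, let $d$ be least with $0 < d < n$ and $\mathrm{Per}(i,n,d)$; then $z := {\bf x}[i..i+n-1]$ has a primitive root of length $d$, and the key observation is that $d \mid n$ holds if and only if ${\bf x}[i+m] = {\bf x}[i+n-d+m]$ for all $0 \le m < d$ (that is, the last length-$d$ block of $z$ equals its first). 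Indeed, since $z$ has period $d$ one has ${\bf x}[i+p] = {\bf x}[i+(p \bmod d)]$ for $0 \le p < n$, so the displayed equality says precisely that the primitive root is invariant under rotation by $n \bmod d$; as that root is primitive this forces $n \bmod d = 0$, and the converse is immediate. Thus ``$z$ is a power'' is expressed by $\exists d\,[(0 < d < n) \wedge \mathrm{Per}(i,n,d) \wedge \forall d'(0 < d' < d \Rightarrow \neg\,\mathrm{Per}(i,n,d')) \wedge \forall m(0 \le m < d \Rightarrow {\bf x}[i+m] = {\bf x}[i+n-d+m])]$, a formula using only quantifiers, addition, comparison, and indexing, hence accepted by a finite automaton. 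Conjoining it with the novelty predicate of (\ref{pred}) yields the required ``novel length-$n$ power'' predicate and completes the first bullet.

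For the second bullet I would avoid analyzing blocks of primitive factors directly (whose number need not be bounded) and instead use that every length-$n$ factor is either a power or primitive, and never both. Hence the number $q(n)$ of distinct length-$n$ primitive factors satisfies $q(n) = \rho_{\bf x}(n) - p(n)$, where $p(n)$ is the power-count function just shown to be $k$-synchronized and $\rho_{\bf x}$ is $k$-synchronized by Theorem~\ref{scthm}. Since $k$-synchronized sequences are closed under subtraction --- one recognizes $\{(n,q(n))_k\}$ by guessing the values of $\rho_{\bf x}(n)$ and $p(n)$, checking each against its synchronizing automaton, and verifying $q(n) + p(n) = \rho_{\bf x}(n)$ with an addition automaton --- the function $q(n)$ is $k$-synchronized as well. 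The main obstacle throughout is the power predicate of the second paragraph; once that is in hand, the block-counting and the subtraction steps are routine given the machinery already developed.
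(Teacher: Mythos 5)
Your proof is correct, and its skeleton --- counting novel occurrences of length-$n$ powers, invoking Lemma~\ref{lnpow} to bound the number of blocks, guessing and verifying block endpoints exactly as in Theorem~\ref{scthm}, and obtaining the primitive-word count by subtraction --- is precisely the paper's. Where you genuinely diverge is the one step you identify as the hard part: expressing ``${\bf x}[i..i+n-1]$ is a power'' as an automaton-recognizable predicate. The paper does this in one line by citing the Lyndon--Sch\"utzenberger theorem: $w$ is a power if and only if $w = yz = zy$ for nonempty $y, z$, which translates directly into ``$\exists d$, $0 < d < n$, such that ${\bf x}[i..i+n-1-d] = {\bf x}[i+d..i+n-1]$ and ${\bf x}[i+n-d..i+n-1] = {\bf x}[i..i+d-1]$'' (the paper's text says ``there does not exist,'' an evident typo). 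You instead single out the \emph{least} period $d$ and argue, via primitivity of the length-$d$ prefix and rotation-invariance, that the wrap-around condition is equivalent to $d \mid n$. This is correct, but your minimality clause is redundant: for \emph{any} $d$, ``$\mathrm{Per}(i,n,d)$'' together with ``last $d$-block equals first $d$-block'' says exactly $w = yz = zy$ with $|y| = d$, which by Lyndon--Sch\"utzenberger already forces $w$ to be a power. Dropping $\forall d' (0 < d' < d \Rightarrow \neg\,\mathrm{Per}(i,n,d'))$ from your formula yields the paper's predicate verbatim, and spares you the fact (true, but asserted rather than proved in your write-up, where you call the length-$d$ prefix ``the primitive root'' of $z$) that the length-$d$ prefix of a word with least period $d$ is primitive. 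What your route buys is self-containedness --- you never need Lyndon--Sch\"utzenberger --- at the cost of a subtler correctness argument; the paper's route is shorter and more robust. One minor point in your favor: your novelty predicate (conjoin ``power'' with ``this factor has no earlier occurrence,'' as in (\ref{pred})) is the right one, whereas the paper's displayed formula for $P'(i,n)$ literally says that no earlier position starts \emph{any} power, a misstatement of the intended ``first occurrence of that particular power.''
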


\begin{proof}
Suppose $\bf x$ is $k$-automatic, and generated by the DFAO $M$.
From the Lyndon-Sch\"utzenberger theorem \cite{Lyndon},
we know that
a word $x$ is a power if and only if there exist nonempty words
$y, z$ such that $x = yz = zy$.  
Thus, we can
express the predicate $P(i,j) :=$ ``${\bf x}[i..j]$ is a power'' as
follows:  ``there does not exist $d$,
$0 < d < j-i+1 $, such that ${\bf x}[i..j-d] = {\bf x}[i+d..j]$ and
${\bf x}[j-d+1..j] = {\bf x}[i..i+d-1]$''.    Furthermore, we can
express the predicate $P'(i,n) :=$ ``${\bf x}[i..i+n-1]$ is a length-$n$
power and the first occurrence of that power in ${\bf x}$'', as
$$P(i,i+n-1) \wedge (\forall i', \ 0 \leq i' < i, \ \neg P(i',i'+n-1) ).$$

From Lemma~\ref{lnpow} we know that the novel occurrences of
length-$n$ powers are clustered into a finite number of blocks.   
Then, as in the proof of Theorem~\ref{scthm}, we can guess the
endpoints of these blocks, and verify that the length-$n$ factors
beginning at the positions inside the blocks are novel occurrences
of powers, while those
outside are not, and sum the lengths of the blocks,
using a finite automaton built from $M$.  Thus, the function counting
the number of length-$n$ powers in $\bf x$ is $k$-synchronized.

The number of length-$n$ primitive words in $\bf x$ is then also
$k$-synchronized, since it
is expressible as the total number of
words of length $n$, minus the number of length-$n$ powers.
\end{proof}

\begin{remark}
Using the technique above, we can prove analogous results for the
functions counting the number of length-$n$ words that are
$\alpha$-powers, for any fixed rational number $\alpha> 1$.
\end{remark}

\section{Unsynchronized sequences}

It is natural to wonder whether other aspects of $k$-automatic sequences
are always
$k$-synchronized.  We give an example that is not.

We say a word $w$ is {\it bordered} if it has a nonempty prefix,
other than $w$ itself, that is also a suffix.  Alternatively, 
$w$ is bordered if it can be written in the form $w = tvt$, where
$t$ is nonempty.    Otherwise a word is {\it unbordered}.

Charlier et al.\ \cite{Charlier&Rampersad&Shallit:2011} showed that 
$u_{\bf x} (n)$, the
number of unbordered factors of length $n$ 
of a sequence $\bf x$, is $k$-regular if $\bf x$ is $k$-automatic.
They also gave a conjecture for 
recursion relations defining $u_{\bf t} (n)$ where
$\bf t$ is the Thue-Morse sequence; this conjecture has
recently been verified by Go\v{c} and Shallit.

We give here an example of a $k$-automatic sequence where the number
of unbordered factors of length $n$ is not $k$-synchronized.

Consider the characteristic sequence of the powers of $2$:
$$ {\bf c} := 011010001000000010 \cdots .$$

\begin{theorem}
The sequence $\bf c$ is $2$-automatic, but the function $u_{\bf c}(n)$ counting
the number of unbordered factors is not $2$-synchronized.
\end{theorem}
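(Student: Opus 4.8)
The plan has two essentially independent parts. That ${\bf c}$ is $2$-automatic is immediate: the powers of two have base-$2$ representations $\{10^i : i \ge 0\} = 10^*$, a regular language, so the characteristic sequence of this set is $2$-automatic. The work is in showing $u_{\bf c}(n)$ is not $2$-synchronized, and here I would first note that the cheap criterion --- result (a), synchronized $\Rightarrow O(n)$ --- is useless, because one finds that $u_{\bf c}(n)$ is only $\Theta(\log n)$. So instead the strategy is to determine $u_{\bf c}(n)$ exactly and then show that the language $L = \{(n, u_{\bf c}(n))_2 : n \ge 0\}$ is not regular.

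To compute $u_{\bf c}(n)$ I would exploit two facts. First, any factor that both begins and ends with $0$ has the border $0$, so every unbordered factor of length $\ge 2$ must begin or end with a $1$ (and $0^n$ is itself bordered). Second --- and this is where the sparse structure is decisive --- the $1$'s of ${\bf c}$ sit exactly at consecutive powers of two, so the runs of $0$'s between successive $1$'s \emph{strictly increase} (there are $2^j - 1$ zeros between $2^j$ and $2^{j+1}$). Using this I would classify the surviving factors: one with $\ge 2$ ones that ends in $1$ but starts in $0$ is (apart from a bounded number of small exceptions) always bordered, because the last internal gap dominates the leading block of $0$'s and reproduces the prefix $0^a 1$ as a suffix; and one beginning and ending with $1$ has the border $1$. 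Hence the only unbordered factors with $\ge 2$ ones have the shape $W = 1 0^{g_1} 1 \cdots 1 0^{b}$, and since every proper suffix of $W$ beginning with $1$ other than $10^b$ has a strictly larger first gap than $W$ itself, the unique candidate border is $10^b$; thus $W$ is unbordered iff $b > g_1 = 2^j - 1$, where $2^j$ is the position of its first $1$. Adding the two single-$1$ unbordered factors $10^{n-1}$ and $0^{n-1}1$ and counting the admissible first positions $2^j$, I would obtain, for all large non-powers-of-two $n$ with $M = \lfloor \log_2 n \rfloor$, the closed form $u_{\bf c}(n) = \lfloor \log_2(2^{M+1}-n) \rfloor + 3$.

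With the formula in hand, non-regularity follows by specializing to $n = 2^{M+1} - 2^t = 1^{M+1-t}0^t$, for which $u_{\bf c}(n) = t+3$; that is, along this family the value is $3$ plus the number $t = \nu_2(n)$ of trailing zeros of $n$. Conceptually, a finite automaton reading $(n,m)_2$ cannot tie the \emph{binary value} $m = t+3$ to the \emph{positional count} $t$, and I would make this rigorous with the pumping lemma, working in the least-significant-digit-first representation (the synchronized class is closed under reversal of the digit string). Given a pumping length $p$, I would pick $t$ so that $m = t+3$ is a power of two, say $t = 2^r - 3$ with $r \ge p$, together with $M = t+1$, so that $n = 3 \cdot 2^t$ and $m = 2^r$. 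Then the low-order $r$ positions of the two-track word are all $(0,0)$, since both $n$ and $m$ have their bottom $r$ binary digits equal to $0$. Pumping a block of these $(0,0)$'s by $\delta \ge 1$ sends $(n,m)$ to $(3 \cdot 2^{t+\delta},\, 2^{r+\delta})$, whereas $u_{\bf c}(3 \cdot 2^{t+\delta}) = t + \delta + 3 = 2^r + \delta \ne 2^{r+\delta}$, so the pumped word leaves $L$, contradicting regularity.

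I expect the main obstacle to be the combinatorial heart of the second paragraph: proving the exact characterization of unbordered factors, and in particular using the strictly increasing gaps both to show that $10^b$ is the \emph{only} possible border of a factor of the form $1 0^{g_1}1\cdots 1 0^b$ and to rule out all candidates that begin with $0$. Once that characterization (and the resulting formula) is established, the pumping argument is routine; the only care it needs is with digit alignment, which I deliberately sidestep by forcing $m$ to be a power of two (so its low digits vanish) and by pumping in the LSD representation, where the number of trailing zeros sits at the front of the word.
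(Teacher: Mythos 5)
Your proposal is correct, and it follows the same broad strategy as the paper---exhibit a family of lengths along which $u_{\bf c}(n)$ grows like $\log_2 n$ and show a finite automaton cannot synchronize $n$ with a logarithmically small value---but the two executions differ genuinely in both halves of the argument. The paper works only with the family $n = 2^r+1$, asserting with little detail that $u_{\bf c}(2^r+1) = r+2$, and then argues informally on a hypothetical synchronizing DFA: reading inputs whose first track is $0^i 1 0^{r-1} 1$ forces the machine around a cycle, and the second track's digits on that cycle are either not all zero (forcing $u_{\bf c}$ to grow linearly along the family) or all zero (forcing it to stay bounded), either way contradicting unbounded sublinear growth; this dichotomy needs no exact values and no digit-alignment care. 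You instead derive a closed form $u_{\bf c}(n) = \lfloor \log_2(2^{M+1}-n)\rfloor + 3$ for all large non-powers-of-two $n$, and your sketched characterization is right: the strictly increasing gaps between consecutive $1$'s do force $10^b$ to be the unique candidate border of $10^{g_1}1\cdots 10^b$, the criterion $b > g_1$ is correct, and the resulting formula reproduces the paper's count $r+2$ at $n = 2^r+1$ and checks out numerically at other values (e.g., $u_{\bf c}(6)=4$, $u_{\bf c}(7)=3$, $u_{\bf c}(12)=5$). Your non-regularity step is then a fully standard pumping-lemma contradiction, made clean by the choices $n = 3\cdot 2^t$, $t = 2^r-3$, so that $m = u_{\bf c}(n) = 2^r$ and both tracks begin with long runs of $(0,0)$ in least-significant-digit-first order, where pumping simply multiplies both components by $2^\delta$. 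What each route buys: yours costs more combinatorics (a complete formula where only the single family $u_{\bf c}(3\cdot 2^t) = t+3$ is actually needed---proving just that specialization would shorten your second paragraph considerably) but yields a rigorous, mechanically checkable automaton argument; the paper's costs almost no combinatorics but leaves its automaton argument at the level of a sketch, and it needs only the qualitative fact that $u_{\bf c}$ is unbounded yet $o(n)$ along a pumpable family, which is a robustness your exact-value argument does not have.
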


\begin{proof}
It is not hard to verify that $\bf c$ is $2$-automatic and
that $\bf c$ has exactly $r+2$ unbordered
factors of length $2^r + 1$, for $r \geq 2$ --- namely, the
factors beginning at positions $2^i$ for $0 \leq i \leq r$,
and the factor beginning at position $2^r + 1$.
However, if $u_{\bf c}(n)$ were $2$-synchronized, then reading an input
where the first component looks like $0^i 1 0^r 1$ (and hence
a representation of $2^{r-1} + 1$) for large $r$ would force
the transitions to enter a cycle.  If the corresponding transitions
for the second component contained a nonzero entry, this would
force $u_{\bf c}(n)$ to grow linearly with $n$ when $n$ is of the form
$2^r + 1$.  Otherwise, the corresponding transitions for the second
component are just $0$'s, in which case $u_{\bf c}(n)$ is bounded above
by a constant, for $n$ of the form $2^r + 1$.  Both cases lead to
a contradiction.
\end{proof}

\end{document}